\colorlet{dblue}{blue!40!black}
\newcommand{\todo}[1]{{\em\color{blue}{\footnotesize{Todo:~#1}}}}
\spnewtheorem{notation}[theorem]{Notation}{\itshape}{\normalfont}
\newcommand{\funap}[2]{#1(#2)}
\renewcommand{\emptyset}{\varnothing}
\newcommand{\set}[1]{\{\,#1\,\}}
\tikzset{arsdefault/.style={thick,shorten >= 0mm, shorten <= .5mm,inner sep=.3mm,outer sep=0mm}}
\tikzset{anno/.style={scale=.85,blue}}
\tikzset{wave/.style={decorate,decoration={snake,amplitude=.4mm,segment length=2mm,post length=1mm}}}
\tikzset{bluered/.style={cblue,nodes={black}}}
\tikzset{redred/.style={cred,nodes={black},wave}}
\tikzset{smallCircle/.style={circle,fill=black,inner sep=0mm,outer sep=1mm,minimum size=1mm}}
\tikzset{default/.style={
  thick,
  every node/.style={circle},
  level distance=12mm, 
  inner sep=.5mm}}
\tikzset{smallCircle/.style={circle,fill=black,inner sep=0mm,outer sep=1mm,minimum size=1mm}}
\tikzset{paint/.style={very thick,draw=#1!50!black,fill=#1,opacity=.4}}
\tikzset{paintopaque/.style={very thick,draw=#1!50!black!60,fill=#1!60}}
\tikzset{loopl/.style={out=140,in=210,looseness=10}}
\tikzset{loopr/.style={out=-40,in=30,looseness=10}}
\tikzset{loopb/.style={out=-40-90,in=30-90,looseness=10}}
\tikzset{loopt/.style={out=-40+90,in=30+90,looseness=10}}
\tikzset{loop/.style={out=-40+#1,in=30+#1,looseness=10}}
\tikzset{thinloop/.style={out=-25+#1,in=15+#1,looseness=20}}
\tikzset{emptystep/.style={-,dotted,line cap=round,dash pattern=on 0 off 3.00000}}
\tikzset{b/.style={anchor=north,at=(#1.south)}}
\tikzset{br/.style={anchor=north west,at=(#1.south east)}}
\tikzset{bl/.style={anchor=north east,at=(#1.south west)}}
\tikzset{bw/.style={anchor=north west,at=(#1.south west)}}
\tikzset{be/.style={anchor=north east,at=(#1.south east)}}
\tikzset{a/.style={anchor=south,at=(#1.north)}}
\tikzset{ar/.style={anchor=south west,at=(#1.north east)}}
\tikzset{al/.style={anchor=south east,at=(#1.north west)}}
\tikzset{aw/.style={anchor=south west,at=(#1.north west)}}
\tikzset{ae/.style={anchor=south east,at=(#1.north east)}}
\tikzset{r/.style={anchor=west,at=(#1.east)}}
\tikzset{l/.style={anchor=east,at=(#1.west)}}
\tikzset{rn/.style={anchor=north west,at=(#1.north east)}}
\tikzset{eta/.style={very thick,->,cblue!90!black}}
\tikzset{beta/.style={very thick,->,corange!80!white!90!black}}
\tikzset{devcirc/.style={circle,draw,fill=white,inner sep=0,minimum size=4\pgflinewidth}}
\tikzset{dev/.style={postaction={decorate},decoration={
  markings,
  mark=at position .5 with \node [devcirc] {};}}
}
\tikzset{medium tree/.style={
    level 1/.style={sibling distance=17mm},
    level 2/.style={sibling distance=9mm},
    level 3/.style={sibling distance=5mm},
    level 4/.style={sibling distance=4mm},
  }}
\tikzset{startAt/.style={inner sep=0mm,r=#1,xshift=-1.2mm,yshift=.8mm}}
\newcommand{\arrowTriangle}[2]{
  \pgfpathmoveto{\pgfpoint{-0.01#1+#2}{.6#1}}
  \pgfpathlineto{\pgfpoint{#1+#2}{0}}
  \pgfpathlineto{\pgfpoint{-0.01#1+#2}{-.6#1}}
  \pgfusepathqfill
}
\newdimen\prearrowsize
\newdimen\arrowsize
\newdimen\temparrowsize
\newcommand{\arrowscale}{5}
\newcommand{\setarrowsize}{
  \arrowsize=0.000000001pt
  \prearrowsize=\arrowscale\pgflinewidth
  \normalizearrowsize
}
\newcommand{\normalizearrowsize}{
  \ifdim\prearrowsize>2mm
    \addtolength{\arrowsize}{2mm}
    \addtolength{\prearrowsize}{-2mm}
    \temparrowsize=0.5\prearrowsize
    \prearrowsize=\temparrowsize
  \else
  \fi

  \addtolength{\arrowsize}{\prearrowsize}
}
  \arrowTriangle{\arrowsize}{-0.1\arrowsize}
  \arrowTriangle{\arrowsize}{-0.1\arrowsize}
  \arrowTriangle{\arrowsize}{-0.1\arrowsize+.8\arrowsize}
  \arrowTriangle{\arrowsize}{-0.1\arrowsize}
  \arrowTriangle{\arrowsize}{-0.1\arrowsize+.8\arrowsize}
  \arrowTriangle{\arrowsize}{-0.1\arrowsize+1.6\arrowsize}
\tikzset{>=red>}
\tikzstyle{gyellow}=[draw=black!80,top color=white!50,bottom color=black!20]
\tikzstyle{gblue}=[draw=blue!50,top color=white,bottom color=blue!60]
\tikzstyle{gred}=[draw=red!50,top color=white,bottom color=red!60]
\tikzstyle{ggreen}=[draw=blue!80!green!90!black,top color=white,bottom color=blue!80!green!60]
\tikzstyle{roundNode}=[gyellow,thick,circle,minimum size=4mm,inner sep=0.5mm]
\definecolor{cblue}{rgb}{0,0.4,0.7}
\definecolor{clighterblue}{rgb}{0,0.6,1.0}
\colorlet{cred}{red}
\colorlet{cgreen}{green!80!black}
\colorlet{corange}{orange!70!red}
\colorlet{cpureorange}{orange}
\colorlet{cpurple}{clighterblue!50!cred}
\colorlet{clightblue}{clighterblue!50!cblue!40}
\colorlet{clightred}{cred!40}
\colorlet{clightgreen}{cgreen!80!cblue!40}
\colorlet{clightyellow}{corange!40!yellow!50}
\colorlet{clightorange}{cred!50!orange!40}
\colorlet{clightpurple}{clighterblue!50!cred!50}
\colorlet{cdarkred}{cred!70!black}
\colorlet{cdarkgreen}{cgreen!60!black}
\colorlet{cdarkblue}{cblue!60!black}
\colorlet{chighlight}{orange!50!yellow!60}
\tikzset{pgnode/.style={smallCircle,fill=white,draw=black,minimum size=1.3mm,outer sep=0.5mm}}
\tikzset{pgnodecolor/.style={pgnode,minimum size=4mm,scale=0.9}}
\tikzset{pgnodebig/.style={roundNode,gyellow,outer sep=1mm}}
\tikzset{pgrelation/.style={ultra thick,cblue!80!black,decorate,decoration={snake,amplitude=.4mm,segment length=4mm}}}
\tikzset{exi/.style={densely dotted}}
\tikzset{decweak/.style={-,green!50!black,very thick,opacity=0.7}}
\tikzset{decstrict/.style={->,orange,very thick,opacity=0.7}}
\newcommand{\ssrc}{\mit{s}}
\newcommand{\src}{\funap{\ssrc}}
\newcommand{\stgt}{\mit{t}}
\newcommand{\tgt}{\funap{\stgt}}
\newcommand{\slab}{\mit{\ell}}
\newcommand{\lab}{\funap{\slab}}
\newcommand{\isomorph}{\approx}
\newcommand{\cxtnode}{\square}
\newcommand{\strace}{\tau}
\newcommand{\cxt}[2]{\mathit{ctx}(#1,#2)}
\tikzset{interconnect/.style={dotted,cred}}
\tikzset{match/.style={cdarkgreen,very thick}}
\tikzset{jigsaw/.style={circle,draw=black,minimum size=2mm,fill=white,minimum size=3.5mm}}
\newcommand{\vin}[5]{\draw [<-,interconnect,#3] (#1.#2) to node [pos=1.5,jigsaw,#4] {{\normalfont #5}} ++(#2:4mm);}
\newcommand{\vout}[5]{\draw [->,interconnect,#3] (#1.#2) to node [pos=1.5,jigsaw,#4] {{\normalfont #5}} ++(#2:4mm);}
\newcommand{\jigsawlabel}[2]{\;%
  \begin{tikzpicture}[default,baseline=(n.base)]
    \node (n) [interconnect,jigsaw,#2,fill=white] {{\normalfont #1}};
  \end{tikzpicture}\,%
}
\colorlet{myblue}{blue!80!black}
\colorlet{mygreen}{cdarkgreen}
\colorlet{myred}{cred}
\colorlet{myorange}{corange}
\colorlet{mypurple}{blue!40!cred}
\tikzset{smalljigsaw/.style={rectangle,rounded corners=3mm,inner sep=1mm}}
\newcommand{\noimplict}[1]{
  \begin{scope}[line width=1.1mm,myred,opacity=0.3]
    \draw ($(#1)+(-4mm,-4mm)$) to ++(8mm,8mm);
    \draw ($(#1)+(-4mm,4mm)$) to ++(8mm,-8mm);
  \end{scope}
}
\newcommand{\tikzstep}[1]{
        \draw [->,ultra thick,n/.style={graphNode}] (#1mm,0) to ++(5mm,0mm);
}
\tikzset{graphNode/.style={circle,draw=black,inner sep=.5mm,outer sep=1mm}}
\begin{document}

\title{Patch Graph Rewriting (Extended Version)\protect\thanks{ 
The present version extends the submission accepted to ICGT20 with an appendix.
The final authenticated publication is available online at \url{https://doi.org/10.1007/978-3-030-51372-6_8}.
}}

\author{Roy Overbeek\,\textsuperscript{\faEnvelopeO} \and J\"{o}rg Endrullis}
\authorrunning{Roy Overbeek and J\"{o}rg Endrullis}

\institute{Vrije Universiteit Amsterdam, Amsterdam, The Netherlands \\
  \email{\{r.overbeek, j.endrullis\}@vu.nl}
}
\maketitle

\begin{abstract}
  \noindent 
  The basic principle of graph rewriting is the stepwise replacement of subgraphs inside a host graph. A challenge in such replacement steps is the treatment of the \emph{patch graph}, consisting of those edges of the host graph that touch the subgraph, but are not part of it.
  
  We introduce \emph{patch graph rewriting}, a visual graph rewriting language with precise formal semantics. The language has rich expressive power in two ways. First, rewrite rules can flexibly constrain the permitted shapes of patches touching matching subgraphs. Second, rules can freely transform patches. We highlight the framework's distinguishing features by comparing it against existing approaches.

  \keywords{Graph rewriting \and Embedding \and Visual language}
\end{abstract}

\section{Introduction}

When matching a graph pattern $P$ inside a host graph $G$, $G$ can be partitioned into (i) a \emph{match} $M$, a subgraph of $G$ isomorphic to the pattern $P$; (ii) a \emph{context} $C$, the largest subgraph of $G$ disjoint from $M$; and (iii) a \emph{patch} $J$, the graph consisting of the edges that are neither in $M$ nor in $C$.
So the patch consists of edges that are either (a) between $M$ and $C$, in either direction, or (b) between vertices of $M$ not captured by the pattern $P$.
For example, if $P$ and $G$ are respectively
\begin{center}
\begin{tikzpicture}[default,node distance=10mm,n/.style={graphNode}]
\begin{scope}
\node (3)[n] {};
\node (4)[n] [right of=3] {};
\node (5)[n] at ($(3)!.5!(4) + (0mm,10mm)$) {};
\draw [->] (3) to node [below] {$b$} (4);
\draw [->] (4) to node [above right] {$a$} (5);
\draw [->] (5) to node [above left] {$a$} (3);
\end{scope}
\node at (29.5mm,5mm) {and};
\begin{scope}[xshift=50mm]
    \node (3)[n] {};
    \node (4)[n] [right of=3] {};
    \node (5)[n] at ($(3)!.5!(4) + (0mm,10mm)$) {};
    
    \node(6)[n] [right of=4] {};
    \node(7)[n] [above of=6] {};
    
    \draw [->, match] (3) to node [below] {$b$} (4);
    \draw [->, match] (4) to node [above right] {$a$} (5);
    \draw [->, match] (5) to node [above left] {$a$} (3);
    \draw [->, myred, interconnect] (4) to node [below] {$b$} (6);
    \draw [->, myred, interconnect] (7) to node [above] {$a$} (5);
    \draw [->] (6) to node [left] {$b$} (7);
    \draw [->, myred, interconnect] (5) to[bend right=80] node [above left] {$c$} (3);
\end{scope}
\end{tikzpicture}
\end{center}
then the thick green subgraph is the (only) match $M$ of $P$ in $G$. The black subgraph of $G$ is the context $C$, and the dotted red subgraph is the patch $J$. Metaphorically, patch~$J$ patches match $M$ and context $C$ together.

In graph rewriting, subgraphs of some host graph are stepwise replaced by other subgraphs. A requirement for such replacements is that they are properly re-embedded in the host graph.
We contend that the patch is the most distinctive and  interesting aspect of graph rewriting. This is because its shape is generally unpredictable, making it challenging to specify what constitutes a proper re-embedding of a subgraph replacement. This contrasts strongly with the situation for string and term rewriting, in which the embeddings of substrings and subterms are highly regular.

Most existing approaches to graph rewriting are rather uniform and coarse-grained in their treatment of the patch. For instance, suppose that we wish to delete the match $M$ from $G$. What should happen to the edges of patch $J$, which would be left ``dangling'' by such a removal? The popular double-pushout (DPO)~\cite{ehrig1973graph} approach to graph rewriting  conservatively dictates that the application is not allowed in the first place: nodes connected to the patch \emph{must} be preserved by the rewrite step, and the patch shall remain connected as before. The single-pushout (SPO)~\cite{lowe1993algebraic} variant, by contrast, permissively answers that such a deletion is always possible. As a side-effect, however, any resulting dangling patch edges are discarded.

In this paper, we introduce the \emph{patch graph rewriting} (\emph{PGR}) language. It has the following features:
\begin{itemize}
    \item \emph{Pluriform, fine-grained control over patches}. Rules themselves encode which kinds of patches are allowed around matches, as well as how they should be transformed for the re-embedding, using a \emph{unified} notation. Thus, these policies are distinctly not decided on the level of the framework.
    \item \emph{An intuitive visual language}. Despite their expressive power and formal semantics, patch rewrite rules admit a visual representation that we believe to be highly intuitive.
    \item \emph{Lightweight formal semantics}. The formal details of PGR are based on elementary set and graph theory, and therefore accessible to a wide audience. In particular, an understanding of category theory is not required to understand these details, unlike for many dominant approaches in graph~rewriting.
\end{itemize}

The remainder of our paper is structured as follows. To fix ideas and emphasize the visual language of PGR, we first provide an intuitive exposition in Section~\ref{sec:visual}, and then follow with a formal introduction in Section~\ref{sec:graph:rewriting}. We show the usefulness of PGR by modeling wait-for graphs and deadlock detection in Section~\ref{sec:wait:for:graphs}.
We compare PGR to other approaches in Section~\ref{sec:comparison}. In Section~\ref{sec:conclusion}, finally, we mention some future research directions for PGR.

\section{Intuitive Semantics}

\input{visual.tex}\label{sec:visual}

\section{Formal Semantics}\label{sec:graph:rewriting}

\begin{notation}[Preliminaries] For functions $f : A_f \to B_f$ and $g : A_g \to B_g$ 
with disjoint domains (but possibly overlapping codomains),
we write ${f} \cup {g}$ for the function $(f \, \cup \, g) : (A_f \, \cup \, A_g) \to (B_f \, \cup \, B_g)$ given by the union of $f$ and $g$'s underlying graphs. If typing permits, we generalize functions $f$ to tuples $(x,y)$ and sets $S$ in the obvious way, i.e., $f ((x,y)) = (f(x), f(y))$ and $f(S) = \{\, f(x) \mid x \in S \,\}$.
\end{notation}

We define directed, edge-labeled multigraphs in the standard way.

\begin{definition}[Graph]
  A \emph{graph} $G = (V,E,\ssrc,\stgt,\slab)$ with edge labels from $L$ consists of
  a finite set of \emph{vertices} (or \emph{nodes}) $V$, a finite set of edges $E$, a \emph{source map} $\ssrc: E \to V$, a \emph{target map} $\stgt: E \to V$, and a \emph{labeling} $\slab: E \to L$.
  For $e \in E$, we say that $\src{e}$, $\tgt{e}$ and $\lab{e}$
  are the \emph{source}, \emph{target} and \emph{label} of $e$, respectively.
  
    For convenience, we will write $x \xrightarrow{\alpha} y \in E$ to denote an edge $e \in E$ such that $\ssrc(e) = x$, $\stgt(e) = y$ and $\slab(e) = \alpha$.
\end{definition}

We depict graphs as usual. An explanation may be found in Appendix~\ref{sec:depicting:graphs}. A discussion on how to encode vertex labels as edge labels is found in Appendix~\ref{sec:vertex:labels}.

\begin{definition}[Basic Graph Notions]
  We define the following basic graph notions:
  \begin{enumerate}[label=(\roman*)]
    \item 
      An \emph{unlabeled graph} $G = (V,E,\ssrc,\stgt)$ is a graph $(V,E,\ssrc,\stgt,\slab)$ over a singleton label set. In this case we suppress the edge labels.
    \item 
      A graph is \emph{simple} if for all $e,e' \in E$, $\ssrc(e) = \ssrc(e')$, $\stgt(e) = \stgt(e')$ and $\slab(e) = \slab(e')$ together imply $e = e'$.
    \item 
      We say that graphs $G$ and $H$ are \emph{disjoint} if $V_G \cap V_H = \emptyset = E_G \cap E_H$. 
    \item 
      For disjoint edge sets $E \cap E' = \emptyset$, we define the \emph{graph union} 
      as follows:
      \begin{align*}
         (V, E, \ssrc, \stgt, \slab) \cup (V', E', \ssrc', \stgt', \slab')
         \;\;=\;\;
         (V \cup V', E \cup E', \ssrc \cup \ssrc', \stgt \cup \stgt', \slab \cup \slab') \;.
      \end{align*}
  \end{enumerate}
\end{definition}

To rename vertices and edges of a graph, we introduce ``graph renamings''.
A renaming is a graph isomorphism, where the domain of the renaming is allowed to be a 
superset of the set of vertices/edges of the graph.
In this way, the same renaming can be applied to various graphs
with different vertex and edge sets.

\begin{definition}[Graph Renaming]
  A \emph{graph renaming} $\phi$ for a graph $G$
  consists of two bijective functions $\phi_V : V_1 \to V_2$ and
  $\phi_E : E_1 \to E_2$ such that $V_G \subseteq V_1$ and $E_G \subseteq E_1$. 
  
  The $\phi$-renaming of $G$, denoted $\phi(G)$,
  is the graph $(V,E,\ssrc,\stgt,\slab)$ defined by
  \begin{align*}
    V &= \phi_V(V_G) &
    \src{\phi_E(e)} &= \phi_V(\ssrc_G(e)) & 
    \lab{\phi_E(e)} &= \slab_G(e)
    \\
    E &= \phi_E(E_G) &
    \tgt{\phi_E(e)} &= \phi_V(\stgt_G(e))
  \end{align*}
  for every $e \in E_G$.
\end{definition}

\begin{definition}[Graph Isomorphism]
  Graphs $G$ and $H$ are \emph{isomorphic}, denoted $G \isomorph H$,
  if there is a graph renaming $\phi$ such that $H = \phi(G)$.
\end{definition}

Let $L$ be a finite nonempty set of labels.
In the sequel, we tacitly assume that all graphs have labels from $L$.

As motivated by the preceding sections, we allow to compose a context graph $C$ and a match graph $M$ by a ``patch'' $J$ that may  add edges between the nodes of $C$ and $M$, as well as between the nodes of $M$.

\begin{definition}[Patch]
  Let $C$ and $M$ be disjoint graphs.
  A \emph{patch} for $C$ and $M$ is a graph $J$ 
  such that $E_J \cap (E_C \cup E_M) = \emptyset$ and $V_J = \ssrc(E_J) \cup \stgt(E_J)$, and
  \begin{align*}
    (\ssrc(e),\stgt(e)) \in (V_C \times V_M) \cup (V_M \times V_C) \cup (V_M \times V_M)
  \end{align*}
  for every edge $e \in E_J$.
  In this relation mediated by $J$, we call $C$ the \emph{context graph} and $M$ the \emph{match graph}.
\end{definition}

\begin{definition}[Patch Composition]
  Let $J$ be a patch for a context graph $C$ and a match graph $M$. The \emph{patch composition} of $C$ and $M$ 
  through \emph{patch} $J$, denoted by $C \cdot_J M$, is the graph union $C \cup J \cup M$.
\end{definition}

\begin{example}
  Consider the following graphs $C$, $M$ and $G$, respectively:
  \begin{center}
   \begin{tikzpicture}[default,node distance=11.5mm,baseline=-6.5ex]
        \begin{scope}[yshift=-6mm]
          \node (1) {$1$};
          \node (2) [right of=1] {$2$};
          \draw [->] (2) to node [below] {$b$} (1);
        \end{scope}
        \begin{scope}[xshift=32mm]
          \node (3) {$3$};
          \node (4) [right of=3] {$4$};
          \node (5) [below of=4] {$5$};
          \node (6) [below of=3] {$6$};
          \draw [->, match] (3) to node [above] {$b$} (4);
          \draw [->, match] (4) to node [right] {$a$} (5);
          \draw [->, match] (5) to node [below] {$a$} (6);
          \draw [->, match] (3)
          to node [left] {$c$}
          (6);
        \end{scope}
        \begin{scope}[xshift=90mm]
          \node (3) {$3$};
          \node (4) [right of=3] {$4$};
          \node (5) [below of=4] {$5$};
          \node (6) [below of=3] {$6$};
          \draw [->, match] (3) to node [above] {$b$} (4);
          \draw [->, match] (4) to node [right] {$a$} (5);
          \draw [->, match] (5) to node [below] {$a$} (6);
          \draw [->, match] (3)
          to node [left] {$c$}
          (6);
        \node (2) at ($(3)!.5!(6) + (-15mm,0mm)$) {$2$};
        \node (1) [left of=2] {$1$};
        \draw [->] (2) to node [below] {$b$} (1);
        \draw [->, interconnect] (2) to node [above] {$a$} (3);
        \draw [->, interconnect] (6) to node [below] {$b$} (2);
        \draw [->, interconnect] (4) to node [below right] {$b$} (6);
        \draw [->, interconnect] (4) to[bend left=50] node [right] {$b$} (5);
\end{scope}
\end{tikzpicture}
\end{center}
  The composition of $C$ and $M$
  through patch 
  $J = \set{ 
    2 \stackrel{a}{\to} 3,\; 
    6 \stackrel{b}{\to} 2,\; 
    4 \stackrel{b}{\to} 5,\;
    4 \stackrel{b}{\to} 6
  }$ is $G$,
  in which $C$ functions as the context graph and $M$ functions as the match graph (w.r.t.\ $J$).
\end{example}

Before we consider the formal definition of rewriting,  let us discuss the basic principle and motivate some of the design choices. As a first approximation, a graph rewrite rule $L \to R$ is a pair of graphs that behave like patterns. Since the edge and vertex identities in such rules are arbitrary (not to be confused with the edge labeling), we close the rule under isomorphism. 
The rule should also be applicable in contexts in which a patch connects a context and the pattern of the rule. The rule $L \to R$ thus give rise to rewrite steps of the form
  $C \cdot_J L' \to C \cdot_{J'} R'$ 
for graphs $C$, valid patches $J, J'$ and graphs $L' \approx L$ and $R' \approx R$.

Additionally, we would like to exert control over the shape of patches in two ways. 
A graph rewriting rule should enable one to
(a) constrain the choices for the patch~$J$, and
(b) define the patch $J'$ in terms of $J$.
For these purposes, we introduce the concepts of a patch type and a scheme.

\begin{definition}[Patch Type]\label{def:patch:type}
  A \emph{patch type}~$T$ for a graph $G$ is an unlabeled patch for $G$ and the trivial graph 
  with node set $\{\, \cxtnode \,\}$. Here, the trivial graph functions as the context graph.
  
  Let $J$ be a patch for a context graph $C$ and match graph $M$, and $T$ a patch type for $M$. A patch edge $(j_s \xrightarrow{\alpha} j_t) \in E_J$ ($\alpha$ any label) \emph{adheres to} a patch type edge $(t_s \to t_t) \in E_T$ if the following conditions hold:
  \begin{align*}
    j_s \in V_C \, &\Rightarrow \, t_s = \cxtnode &
    j_s \in V_M \, &\Rightarrow \, j_s = t_s \\ 
    j_t \in V_C \, &\Rightarrow \, t_t = \cxtnode &
    j_t \in V_M \, &\Rightarrow \, j_t = t_t
  \end{align*}
  The patch $J$ \emph{adheres to} patch type $T$ if there exists an \emph{adherence map} from $J$ to~$T$, i.e., a function $f: E_J \to E_T$ such that $e$ adheres to $f(e)$ for every $e \in E_J$.
\end{definition}

The restriction to unlabeled patch type edges is motivated purely by simplicity. We intend to relax the definition in future work.

\newcommand{\sadheres}[2]{\mathit{adh}_{#1}(#2)}

\begin{proposition}[Unique Adherences]\label{prop:simple:adheres:to:one}
Let the patch type $T$ be a simple graph. If a patch $J$ adheres to $T$, then the witnessing adherence map is unique.
\end{proposition}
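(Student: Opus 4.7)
The plan is to show that for every patch edge $e \in E_J$, the adherence conditions force the source and target of $f(e)$ in $T$ to take specific, $e$-dependent values, and then to invoke the simplicity of $T$ (together with the fact that patch types are unlabeled) to conclude that $f(e)$ itself is uniquely determined by its endpoints.

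First I would fix an arbitrary patch edge $(j_s \xrightarrow{\alpha} j_t) \in E_J$ and any adherence map $f$, writing $(t_s \to t_t) = f(e) \in E_T$. Since $J$ is a patch for $C$ and $M$, its endpoint $j_s$ lies in $V_C \cup V_M$, and because $C$ and $M$ are disjoint (as required in the definition of a patch), exactly one of these two memberships holds. The four bullets of Definition~\ref{def:patch:type} then give a single applicable clause for $t_s$: either $t_s = \cxtnode$ (if $j_s \in V_C$) or $t_s = j_s$ (if $j_s \in V_M$). The symmetric argument pins down $t_t$. Thus the source and target of $f(e)$ depend only on $e$, not on $f$.

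Next I would invoke simplicity. Since patch types are unlabeled, the label condition in the definition of simplicity is vacuous, so simplicity of $T$ reduces to: edges are determined by their source-target pair. Combined with the previous step, this implies that there is at most one edge of $T$ that $f$ could assign to $e$; and since $f$ is assumed to be an adherence map, this candidate must in fact exist in $E_T$ (it is $f(e)$). Therefore $f(e)$ is uniquely determined. Letting $e$ range over $E_J$, any two adherence maps $f, f' : E_J \to E_T$ agree pointwise, hence are equal.

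I do not anticipate a genuinely hard step; the only subtlety worth stating carefully is that the disjointness $V_C \cap V_M = \emptyset$ makes the clauses defining $t_s$ (and $t_t$) mutually exclusive, so no inconsistency arises and the forced values of $t_s, t_t$ are well-defined. The proof is essentially a direct bookkeeping argument combining Definition~\ref{def:patch:type} with the graph-theoretic notion of simplicity specialized to the unlabeled case.
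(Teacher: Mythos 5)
Your argument is correct and is evidently the intended one: the paper states Proposition~\ref{prop:simple:adheres:to:one} without proof, and your bookkeeping---disjointness of $V_C$ and $V_M$ forces a unique applicable clause for each endpoint, so the source and target of $f(e)$ are determined by $e$ alone, and simplicity of the (unlabeled) patch type $T$ then pins down $f(e)$---is exactly the reasoning the statement relies on. No gaps; the observation that the label condition in simplicity is vacuous for unlabeled graphs is the right level of care.
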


Intuitively, we use patch types to annotate the patterns of a rewrite rule. The result we call a scheme.

\begin{definition}[Scheme]
  A \emph{scheme} is a pair $(P,T)$  consisting of a graph $P$, called a \emph{pattern}, and a patch type $T$ for~$P$.
\end{definition}

\vspace{-2ex}
\noindent
\begin{minipage}{0.65\textwidth}%
\noindent
\begin{example}[Depicting Schemes]
  We extend the representation for graphs to schemes $(P,T)$ as shown on the right.
  The pattern $P$ consists of the solid labeled arrows,
  and the patch type $T$ consists of the dotted arrows.
  For dotted arrows without a source (or target),
  the source (or target) is implicitly the context graph node $\cxtnode$. So 
%
  $T$ consists of the edges 
  $\set{ 
    \cxtnode \to 1,\; 3 \to \cxtnode,\;
    \cxtnode \to 4,\; 4 \to \cxtnode,\;
    1 \to 3
  }$.
\end{example}
\end{minipage}\hfill%
\begin{minipage}{0.32\textwidth}
  \begin{tikzpicture}[default,node distance=12mm,]
    \node (1) [] {$1$};
    \node (2) [right of=1] {$2$};
    \node (3) [above right of=2] {$3$};
    \node (4) [below right of=2] {$4$};

    \draw [->] (1) to node [above] {$a$} (2);
    \draw [->] (2) to node [above left] {$b$} (3);
    \draw [->] (2) to node [below left] {$a$} (4);
    
    \vin{1}{-180}{red}{draw=none,minimum size=0}{}
    \vin{4}{-180}{red}{draw=none,minimum size=0}{}
    \vout{3}{0}{red}{draw=none,minimum size=0}{}
    \vout{4}{0}{red}{draw=none,minimum size=0}{}
    \draw [->,interconnect] (1) to[bend left=30] (3);
  \end{tikzpicture}
\end{minipage}
\bigskip

We are now ready to define a graph rewrite rule as a relation between schemes $(P_L, T_L)$ and $(P_R, T_R)$. We equip the rewrite rule with a ``trace function'' $\strace$ that relates edges in $T_R$ back to edges in $T_L$, allowing us to interpret $T_R$ as a transformation of $T_L$, in which patch edges may be freely moved, deleted, duplicated and inverted. For this we require the following constraint: if a patch type edge $e \in E_{T_R}$ connects to the context, the corresponding edge $\tau(e) \in E_{T_L}$ must also connect to the context. Without this constraint, it would not be clear how to interpret $e$'s connection to the context.

\begin{definition}[Quasi Patch Graph Rewrite Rule]\label{def:graph:rewrite:rule}
  A \emph{quasi patch graph rewrite rule} $L \xrightarrow{\strace} R$ is a pair of schemes $L = (P_L, T_L)$ and $R = (P_R, T_R)$, equipped with a \emph{trace function} $\strace : E_{T_R} \to E_{T_L}$ that satisfies
  $
    \cxtnode \in \{ \ssrc(e), \stgt(e) \} \, \Longrightarrow \, \cxtnode \in \{ \ssrc(\strace(e)), \stgt(\strace(e)) \}
  $
  for all $e \in E_{T_R}$.
\end{definition}

We normally require the left patch type $T_L$ to be simple, so that the edges of $T_L$-adherent patches $J$ adhere to a single edge in $T_L$~(Proposition~\ref{prop:simple:adheres:to:one}). As we shall see, this allows us to define a graph rewrite relation in which matches of a rule produce a unique result (modulo~$\isomorph$).

\begin{definition}[Patch Graph Rewrite Rule]
A \emph{patch graph rewrite rule} is a quasi patch graph rewrite rule $(P_L, T_L) \xrightarrow{\tau} (P_R, T_R)$ in which $T_L$ is simple.
\end{definition}

Since we restrict attention to unlabeled patch type graphs in this paper, we will use the opportunity to visualize the trace function $\strace$ by means of labels on patch type edges.

\begin{example}[Depicting Rules]\label{ex:rewrite:rule}
A depiction of a valid rewrite rule is:
  \begin{center}
  \begin{tikzpicture}[default,node distance=18mm,n/.style={graphNode}]
    \begin{scope}[xshift=0mm]
        \node (2) [n] {};
        \node (1) [n] at ($(2)+(180:13mm)$) {};
        \node (3) [n] at ($(2)+(40:13mm)$) {};
        \node (4) [n] at ($(2)+(-40:13mm)$) {};
    
        \draw [->] (1) to node [below] {$a$} (2);
        \draw [->] (2) to node [below right] {$b$} (3);
        \draw [->] (2) to node [above right] {$a$} (4);
        \draw [->,interconnect,myred] (1) to[bend left=20] node [jigsaw] {$2$} (3);
        
        \vin{1}{180}{mypurple}{}{$1$}
        \vout{3}{0}{myorange}{}{$3$}
        \vin{4}{180}{myblue}{}{$4$}
        \vout{4}{0}{teal}{}{$5$}
    \end{scope}
    \tikzstep{25}
    \begin{scope}[xshift=55mm]
        \node (2) [] {};
        \node (1) [n] at ($(2)+(180:10mm)$) {};
        \node (3) [n] at ($(2)+(60:10mm)$) {};
        \node (4) [n] at ($(2)+(-60:10mm)$) {};
    
        \draw [->,interconnect] (1) to[bend left=-25] node [jigsaw] {$2$}  (3);
        \draw [->,interconnect,mypurple] (1) to[bend left=25] node [jigsaw] {$1$}  (3);
        \draw [->] (3) to[bend left=-10] node [right] {$c$} (4);
        \draw [->] (4) to[bend left=-10] node [below left] {$a$} (1);
        
        \vin{1}{180}{myblue}{}{$4$}
        \vout{4}{0}{myblue}{}{$4$}
        \vout{3}{0}{myorange}{}{$3$}
    \end{scope}
  \end{tikzpicture}
  \end{center}
  The trace function $\strace$ is visualized by means of labels on the type edges: $\tau$ maps 
  type edges with label $n$ on the right-hand side to the type edge with label $n$ on the left-hand side. \emph{Throughout the paper, colors are merely used as a supplementary visual aid.}
  (An application example will be given in Example~\ref{ex:rewrite}.)
\end{example}

\begin{definition}[Rule Isomorphism]
  (Quasi) rewrite rules $L_1 \xrightarrow{\strace_1} R_1$ and $L_2 \xrightarrow{\strace_2} R_2$ are \emph{isomorphic}, 
  denoted ${L_1 \xrightarrow{\strace_1} R_1} \isomorph {L_2 \xrightarrow{\strace_2} R_2}$,
  if there exists a graph renaming $\phi$ such that
     $\phi_V(\cxtnode) = \cxtnode$,
     $\phi((L_1, R_1)) = (L_2, R_2)$, and 
    $\phi_E \circ \strace_1 = \strace_2 \circ \phi_E$.
\end{definition}

\newcommand{\ssystem}{\mathcal{R}}
\begin{definition}[Patch Graph Rewrite System]
  A \emph{(quasi) patch graph rewrite system} (\emph{PGR}) $\ssystem$ is a set of (quasi) rewrite rules. For $\ssystem$ we define the isomorphism closure class
  $\ssystem^\isomorph = \set{ y \mid x \in \ssystem,\;y \isomorph x}$.
\end{definition}

For a patch $J$, patch type $T$ and adherence map $h: E_J \to E_T$, we define
\begin{align*}
  \cxt{e}{h} = 
    \begin{cases}
      \set{\src{e}} &\text{if $\src{h(e)} = \cxtnode$}\\
      \set{\tgt{e}} &\text{if $\tgt{h(e)} = \cxtnode$}\\
      \emptyset &\text{otherwise}
    \end{cases}
\end{align*}
for every $e \in E_J$.
So $\cxt{e}{h}$ contains the context node involved in the edge $e$,
or is $\emptyset$ if the edge does not involve the context.

\begin{definition}[Patch Graph Rewriting]\label{def:graph:rewriting}
  A (quasi) patch graph rewrite system $\ssystem$ induces a rewrite relation~$\to_\ssystem$ on the set of graphs as follows:\\
    $C \cdot_J P_L \; \to_\ssystem \; C \cdot_{J'} P_R$
  if
  \begin{enumerate}[label=(\roman*)]
    \item\label{def:graph:rewriting:one} $(P_L, T_L) \xrightarrow{\strace} (P_R, T_R) \in \ssystem^\isomorph$,
    \item\label{def:graph:rewriting:two} $h_L : E_J \to E_{T_L}$ is an adherence map from patch $J$ to patch type $T_L$,
    \item\label{def:graph:rewriting:three} $h_R : E_{J'} \to E_{T_R}$ is an adherence map from patch $J'$ to patch type $T_R$, and
    \item\label{def:graph:rewriting:four}
      for every $t \in E_{T_R}$ there exists a 
      bijection $\sigma : h_R^{-1}(t) \to h_L^{-1}(\strace(t))$ such that
        $\slab_R(e) = \slab_L(\sigma(e))$ and
        $\cxt{e}{h_R} \subseteq \cxt{\sigma(e)}{h_L}$
    for every $e \in h_R^{-1}(t)$. 
  \end{enumerate}
  For such a rewrite step, we say that the graph $C \cdot_J P_L$ contains the \emph{redex} $P_L$.
\end{definition}

    \begin{example}[Application Example]\label{ex:rewrite}
  The rule given in Example~\ref{ex:rewrite:rule} gives rise to the following rewrite step:
  \begin{center}\vspace{1ex}
  \begin{tikzpicture}[default,node distance=11mm,n/.style={graphNode}]
    \begin{scope}[xshift=0mm]
        \node (2) [n] {};
        \node (1) [n] at ($(2)+(180:11mm)$) {};
        \node (3) [n] at ($(2)+(60:11mm)$) {};
        \node (4) [n] at ($(2)+(-60:11mm)$) {};
        \node (34) [n] at ($(2)+(0:11mm)$) {};
        \node (14) [n] at ($(2)+(-120:11mm)$) {};
        \node (13) [n] at ($(2)+(120:11mm)$) {};
    
        \begin{scope}[match]
        \draw [->] (1) to node [below] {$a$} (2);
        \draw [->] (2) to node [below right] {$b$} (3);
        \draw [->] (2) to node [above right] {$a$} (4);
        \end{scope}
        \begin{scope}[cred,interconnect]
        \draw [->, red] (1) to node [above] {$b$} (3);
        \draw [->, mypurple] (14) to node [below left] {$d$} (1);
        \draw [->, blue] (14) to node [below] {$b$} (4);
        \draw [->, teal] (4) to node [below right] {$b$} (34);
        \draw [->, mypurple] (13) to node [above left] {$b$} (1);
        \end{scope}
        \begin{scope}[black]
        \draw [->] (14) to[bend left=80,looseness=1.6] node [left] {$c$} (13);
        \end{scope}
    \end{scope}
    \tikzstep{17.2}
    \begin{scope}[xshift=46mm]
        \node (2) {};
        \node (1) [n] at ($(2)+(180:11mm)$) {};
        \node (3) [n] at ($(2)+(60:11mm)$) {};
        \node (4) [n] at ($(2)+(-60:11mm)$) {};
        \node (34) [n] at ($(2)+(0:11mm)$) {};
        \node (14) [n] at ($(2)+(-120:11mm)$) {};
        \node (13) [n] at ($(2)+(120:11mm)$) {};
    
        \begin{scope}[match]
        \draw [->] (3) to[bend left=15] node [left] {$c$} (4);
        \draw [->] (4) to[bend left=15] node [above right] {$a$} (1);
        \end{scope}
        \begin{scope}[interconnect]
        \draw [->, myred] (1) to[bend left=-55] node [below right] {$b$} (3);
        \draw [->, mypurple] (1) to[bend left=-10] node [below right] {$b$} (3);
        \draw [->, mypurple] (1) to[bend left=25] node [below right,inner sep=0] {$d$} (3);
        \draw [->, myblue] (14) to[bend left=15] node [left,pos=.6] {$b$} (1);
        \draw [->, myblue] (4) to[bend left=0] node [below,pos=0.4] {$b$} (14);
        \end{scope}
        \begin{scope}[black]
        \draw [->] (14) to[bend left=80,looseness=1.6] node [left] {$c$} (13);
        \end{scope}
    \end{scope}
  \end{tikzpicture}\vspace{-1ex}
  \end{center}
  In the graph on the left we have highlighted the match (thick green) and the patch (dotted). We have additionally indicated the adherence map of the patch edges by reusing the colors of the rule definition.
\end{example}
We refer to Section~\ref{sec:visual} for many examples of simple rewrite rules and
rewrite steps, and to Appendix~\ref{sec:example:rules} for rewrite rules demonstrating standard graph operations such as merging, copying and splitting nodes. A graph rewrite system modeling wait-for graphs will be given in Section~\ref{sec:wait:for:graphs}. 

\begin{remark}[Finding Redexes]\label{rem:check:redex}
    Checking for the presence of a redex is simple.
    A graph $G$ contains a redex with respect to rule $(P_L, T_L) \xrightarrow{\strace} (P_R, T_R) \in \ssystem$ 
    if and only if
    \begin{enumerate}
      \item\label{rem:check:redex:subgraph}
        there exists a subgraph $M$ of $G$ isomorphic to $P_L$, and
      \item\label{rem:check:redex:adherence}
        every edge $e \notin E_M$ incident to a $v \in V_M$ adheres to an edge in $T_L$.
    \end{enumerate}
\end{remark}

Definition~\ref{def:graph:rewriting} can be understood in more operational terms as follows.

\begin{lemma}[Constructing $J'$]\label{lem:constructing:j'}
  If conditions \ref{def:graph:rewriting:one} and \ref{def:graph:rewriting:two} of Definition~\ref{def:graph:rewriting} are satisfied (fixing some adherence map $h_L$), the patch $J'$ and adherence map $h_R$ that satisfy conditions \ref{def:graph:rewriting:three} and \ref{def:graph:rewriting:four} are uniquely determined up to isomorphism.
  The patch $J'$ can be constructed using the following procedure.

For every type edge $t = (t_s \to t_t) \in E_{T_R}$, consider every patch edge $j = (j_s \xrightarrow{\alpha} j_t) \in E_J$ for which $h_L(j) = \strace(t) = (t_s^\tau \to t_t^\tau) \in E_{T_L}$. There are five exclusive cases:
    \begin{enumerate}
        \item If $\cxtnode \notin \{ t_s, t_t \}$, add a new edge $t_s \xrightarrow{\alpha} t_t$ to $J'$.
        \item If $t_s = t_s^\tau = \cxtnode$, add a new edge $j_s \xrightarrow{\alpha} t_t$ to $J'$.
        \item If $t_t = t_t^\tau = \cxtnode$, add a new edge $t_s \xrightarrow{\alpha} j_t$ to $J'$.
        \item If $t_s = t_t^\tau = \cxtnode$, add a new edge $j_t \xrightarrow{\alpha} t_t$ to $J'$.
        \item If $t_t = t_s^\tau = \cxtnode$, add a new edge $t_s \xrightarrow{\alpha} j_s$ to $J'$.
  \end{enumerate}
  Here, the ``new'' edge $j'$ is an edge not in $C$, $P_R$ or the intermediate construction of $J'$. The adherence map $h_R$ is defined such that $h_R(j') = t$ for each of the considered $j'$ and $t$.
\end{lemma}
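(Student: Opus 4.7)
The plan is to prove the lemma in two parts: first verify that the construction produces a $J'$ and $h_R$ satisfying conditions \ref{def:graph:rewriting:three} and \ref{def:graph:rewriting:four}, and then establish uniqueness up to isomorphism. The five-case enumeration is exhaustive because a patch type edge cannot have both endpoints equal to $\cxtnode$ (by Definition~\ref{def:patch:type}), and the trace constraint of Definition~\ref{def:graph:rewrite:rule} rules out the remaining combinations: whenever $\cxtnode \in \{t_s, t_t\}$, we must also have $\cxtnode \in \{t_s^\tau, t_t^\tau\}$.

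For existence, I would first check that each new edge $j'$ added by one of the five cases has endpoints lying in $(V_C \times V_{P_R}) \cup (V_{P_R} \times V_C) \cup (V_{P_R} \times V_{P_R})$, so that $J'$ is a valid patch for $C$ and $P_R$; this is immediate from the case split. Setting $h_R(j') = t$ gives an adherence map by inspection: for instance, in Case~2, $j'$ has source $j_s \in V_C$ matching $t_s = \cxtnode$, and target $t_t \in V_{P_R}$ matching the pattern-side requirement of Definition~\ref{def:patch:type}. For condition \ref{def:graph:rewriting:four}, I would define $\sigma : h_R^{-1}(t) \to h_L^{-1}(\strace(t))$ by sending the generated edge $j'$ back to the $j$ that produced it; this is a bijection by construction. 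Label preservation holds since each case copies $\alpha = \slab_L(j)$. The inclusion $\cxt{j'}{h_R} \subseteq \cxt{j}{h_L}$ is trivial in Case~1 (the left side is empty), and in Cases~2--5 one reads off that the context endpoint of $j'$ coincides with the context endpoint of $j$.

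For uniqueness, suppose $(J'_1, h_{R,1})$ and $(J'_2, h_{R,2})$ both satisfy \ref{def:graph:rewriting:three}--\ref{def:graph:rewriting:four}, with witnessing bijections $\sigma_1, \sigma_2$. I would build a renaming $\phi$ that is the identity on vertices and sends $j'_1 \in E_{J'_1}$ to $\sigma_2^{-1}(\sigma_1(j'_1)) \in E_{J'_2}$. The core of the argument is to show that $\phi$ preserves sources, targets, and labels. Labels pass through because both $\sigma_i$ preserve labels. Pattern-side endpoints are forced by adherence: if, say, $t_s \in V_{P_R}$, then $\src(j'_i) = t_s$ for $i = 1, 2$. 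The subtle case is the context-side endpoint: if $t_s = \cxtnode$, then $\cxt{j'_i}{h_{R,i}} = \{\src(j'_i)\}$ is nonempty, and by the trace constraint $\strace(t)$ also involves the context, so $\cxt{j}{h_L}$ with $j = \sigma_i(j'_i)$ is a singleton. The $\cxt$ containment then pins $\src(j'_i)$ down to the unique context endpoint of $j$; symmetric reasoning applies to targets and to the case $t_t = \cxtnode$. Since $\sigma_1(j'_1) = \sigma_2(\phi_E(j'_1))$, the endpoints of $j'_1$ and $\phi_E(j'_1)$ agree.

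The main obstacle I anticipate is the uniqueness step, specifically the bookkeeping that shows the context-side endpoints of $J'$-edges are forced. The argument hinges on the trace constraint of Definition~\ref{def:graph:rewrite:rule}, which guarantees that whenever $\cxt{j'}{h_R}$ is nonempty, so is $\cxt{\sigma(j')}{h_L}$, making the containment into an equality of singletons. Once this is isolated, the remaining verifications are routine case analyses that mirror the five-case split in the construction.
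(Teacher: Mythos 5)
Your proof is correct: the case analysis is exhaustive precisely because of the trace-function constraint of Definition~\ref{def:graph:rewrite:rule}, the existence verification (including the observation that adherence of $j$ to $\strace(t)$ forces $j_s$ resp.\ $j_t$ into $V_C$ in Cases~2--5) is sound, and the uniqueness argument correctly isolates the key point that the containment $\cxt{e}{h_R} \subseteq \cxt{\sigma(e)}{h_L}$ between an inhabited singleton and an at-most-singleton set pins down the context-side endpoints. The paper states this lemma without proof, so there is nothing to diverge from; your argument supplies exactly the verification the authors leave implicit.
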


Non-quasi rules have the following desirable property.

\begin{proposition}[Rule Determinism]
Let $G = C \cdot_J P_L$. If a rule $(P_L, T_L) \xrightarrow{\strace} (P_R, T_R) \in \ssystem^\isomorph$ derives both $G \to_\ssystem C \cdot_{J'} P_R = G'$
and
$
G \to_\ssystem C \cdot_{J''} P_R = G''
$,
then $G' \isomorph G''$.
\end{proposition}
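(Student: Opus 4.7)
The plan is to reduce the statement to two results already in hand: Proposition~\ref{prop:simple:adheres:to:one} and Lemma~\ref{lem:constructing:j'}. Observe first that both rewrite steps are asserted against the \emph{same} decomposition $G = C \cdot_J P_L$; hence the context graph $C$, the occurrence $P_L$, and consequently the patch $J$ (the edges of $G$ that lie in neither $C$ nor $P_L$) are literally identical in the two derivations. The rule, with its trace function $\tau$, is also the same.

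Next, because the hypothesis concerns a patch graph rewrite rule (not merely a quasi rule), the patch type $T_L$ is simple. Proposition~\ref{prop:simple:adheres:to:one} therefore forces the adherence map $h_L : E_J \to E_{T_L}$ to be unique. Hence clauses~\ref{def:graph:rewriting:one} and~\ref{def:graph:rewriting:two} of Definition~\ref{def:graph:rewriting} pin down exactly the same data for both rewrite steps.

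I then invoke Lemma~\ref{lem:constructing:j'}: conditional on the rule, the patch $J$ and the map $h_L$, the pair $(J', h_R)$ satisfying conditions~\ref{def:graph:rewriting:three}--\ref{def:graph:rewriting:four} is uniquely determined up to isomorphism. Applying this separately to the two derivations yields $J' \isomorph J''$, and hence $C \cdot_{J'} P_R \isomorph C \cdot_{J''} P_R$, i.e.\ $G' \isomorph G''$.

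The only point requiring any care — and the natural candidate for a subtle step — is to make sure the isomorphism furnished by Lemma~\ref{lem:constructing:j'} can be chosen as the identity on $C$ and $P_R$, so that it extends to a \emph{graph} isomorphism of the composed graphs rather than merely an abstract isomorphism of the patches. This, however, is transparent from the five-case construction in the lemma: the endpoints of each freshly added edge of $J'$ are dictated deterministically by the vertices named in $T_R$ and by $j_s, j_t \in V_C$, leaving only the \emph{edge identity} in $J'$ free. Thus the witnessing renaming acts as the identity on vertices and on $E_C \cup E_{P_R}$, and differs from the identity only by renaming the new patch edges — precisely what is needed to obtain $G' \isomorph G''$.
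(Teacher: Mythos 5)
Your proof is correct and follows exactly the paper's own route: the paper's proof is the one-line observation that the claim is a direct consequence of Proposition~\ref{prop:simple:adheres:to:one} (uniqueness of the adherence map $h_L$ since $T_L$ is simple) and Lemma~\ref{lem:constructing:j'} (uniqueness of $(J',h_R)$ up to isomorphism given $h_L$). Your additional remark that the witnessing isomorphism is the identity on $C$ and $P_R$ and only renames the fresh patch edges is a sound and worthwhile elaboration of why the patch-level isomorphism lifts to $G' \isomorph G''$.
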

\begin{proof}
This is a direct consequence of Proposition~\ref{prop:simple:adheres:to:one} and Lemma~\ref{lem:constructing:j'}. \qed
\end{proof}
 
In contrast to (non-quasi) graph rewrite rules, quasi rules are not generally deterministic. For instance, consider the quasi rewrite rule
  \begin{center}
  \begin{tikzpicture}[default,node distance=15mm,n/.style={graphNode}]
    \begin{scope}[xshift=0mm]
        \node (1) [n] {};
        \node (2) [n,right of=1] {};
        
        \draw [->, interconnect] (1) to [bend left=30] node [jigsaw] {$1$} (2);
        \draw [->, interconnect,myblue] (1) to [bend left=-30] node [jigsaw] {$2$} (2);
    \end{scope}
    \tikzstep{21}
    \begin{scope}[xshift=32mm]
        \node (1) [n] {};
        \node (2) [n,right of=1] {};
        
        \draw [->, interconnect] (1) to [bend left=30] node [jigsaw] {$1$} (2);
    \end{scope}
  \end{tikzpicture}
  \end{center}
  which can match any graph $G$ consisting of two nodes $x$ and $y$ and $n$ edges from $x$ to $y$. For each $e \in E_G$, the left adherence map $h_L$ can either map $e$ to the patch type edge 
  labeled with $1$, or to the type edge labeled with $2$. Thus, $2^n$ choices for $h_L$ are 
  possible, and each choice causes a different subset of $J$ to be deleted in a single rewrite step.

\begin{notation}[Shorthand Notation]\label{notation:shorthand}
Given a pattern $P$, we often want to allow for any patch edges between the nodes of a subset $N \subseteq V_P$ as well as the context node $\cxtnode$.
  In the notation we have discussed so far, we would then need to draw the complete patch type graph induced by $N \cup \{\, \cxtnode \,\}$ (minus the loop on $\cxtnode$), which consists of $(|N|+1)^2 - 1$ patch type edges.
  
  To avoid spaghetti-like figures, we extend the visual presentation of schemes by allowing each node to be annotated with a set of names (written without set brackets). We say that a node \emph{has name} $x$ if $x$ is in the set of names of this node. So a node can have $0$ or more names.
  The name annotations are then shorthand for the following patch type edges:
  \begin{enumerate}[label=(\roman*)]
    \item
      For every node $n$ and name $x$ of $n$,
      the node $n$ has the two patch type edges
      \begin{tikzpicture}[default,node distance=25mm,n/.style={graphNode},baseline=(1.base)]
        \begin{scope}[xshift=0mm]
            \node (1) {$n$}; 
            \vin{1}{180}{myred}{smalljigsaw,pos=2.5}{$(\cxtnode,x)$}
            \vout{1}{0}{myred}{smalljigsaw,pos=2.5}{$(x,\cxtnode)$}
        \end{scope}
      \end{tikzpicture} from and to the context.
    \item 
      For every pair of nodes $n,m$ and every name $x$ of $n$ and $y$ of $m$,
      there is implicitly the patch type edge
      \begin{tikzpicture}[default,node distance=25mm,n/.style={graphNode},baseline=(1.base)]
        \begin{scope}[xshift=0mm]
            \node (1) [] {$n$}; 
            \node (2) [right of=1] {$m$}; 
            \draw [->,interconnect] (1) to node [jigsaw,smalljigsaw] {$(x,y)$} (2);
        \end{scope}
      \end{tikzpicture} from  from $n$ to $m$.
      Here $n$ and $m$ can be the same node, and $x$ can be equal to $y$.
  \end{enumerate}
  Observe that rules are non-quasi iff every node in the left-hand side has at most one name. We therefore require that distinct nodes do not share names.
  
  As an example, a rule for merging two nodes can be written as
  \begin{align}
  \begin{tikzpicture}[default,node distance=10mm,n/.style={graphNode},baseline=0ex]
    \begin{scope}[xshift=0mm]
        \node (1) [n] {}; 
          \node [anchor=center,at=(1.south),inner sep=0,yshift=-1.1mm] {$x$};
        \node (2) [n,right of=1] {}; 
          \node [anchor=center,at=(2.south),inner sep=0,yshift=-1.1mm] {$y$};
        \draw [->] (1) to[bend right=0] node [above] {$a$} (2);
    \end{scope}
    \tikzstep{16}
    \begin{scope}[xshift=27mm]
        \node (1) [n] {};
          \node [anchor=center,at=(1.south),inner sep=0,outer sep=0,yshift=-1.1mm] {$x,y$};
    \end{scope}
  \end{tikzpicture}
  \label{rule:merge:abbreviated}
  \end{align}  
  which is shorthand for
  \begin{align*}
  \begin{tikzpicture}[default,node distance=20mm,n/.style={graphNode},baseline=0ex]
    \begin{scope}[xshift=0mm]
        \node (1) [n] {};
        \node (2) [n,right of=1] {};
        \draw [->] (1) to[bend right=0] node [above] {$a$} (2);
        \vin{1}{-115}{myred}{smalljigsaw,pos=1.6}{$(\cxtnode,x)$}
        \vout{1}{115}{myblue}{smalljigsaw,pos=1.6}{$(x,\cxtnode)$}
        \draw [->, interconnect,myorange] (1) to [loop=180,looseness=22,pos=0.55] node [jigsaw,smalljigsaw] {$(x,x)$} (1);
        \draw [->, interconnect,myorange] (1) to [bend left=70] node [jigsaw,smalljigsaw] {$(x,y)$} (2);
        \draw [->, interconnect,myorange] (2) to [bend left=70] node [jigsaw,smalljigsaw] {$(y,x)$} (1);
        \vin{2}{-65}{mygreen}{smalljigsaw,pos=1.6}{$(\cxtnode,y)$}
        \vout{2}{65}{mypurple}{smalljigsaw,pos=1.6}{$(y,\cxtnode)$}
        \draw [->, interconnect,myorange] (2) to [loop=0,looseness=22,pos=0.55] node [jigsaw,smalljigsaw] {$(y,y)$} (2);
    \end{scope}
    \tikzstep{40}
    \begin{scope}[xshift=66mm]
        \node (1) [n] {};
        \vin{1}{-135}{myred}{smalljigsaw,pos=1.8,xshift=-3mm}{$(\cxtnode,x)$}
        \vout{1}{135}{myblue}{smalljigsaw,pos=1.8,xshift=-3mm}{$(x,\cxtnode)$}
        \vin{1}{-45}{mygreen}{smalljigsaw,pos=1.8,xshift=3mm}{$(\cxtnode,y)$}
        \vout{1}{45}{mypurple}{smalljigsaw,pos=1.8,xshift=3mm}{$(y,\cxtnode)$}
        \draw [->, interconnect,myorange] (1) to [thinloop=180,looseness=35,pos=0.6] node [jigsaw,smalljigsaw] {$(x,x)$} (1);
        \draw [->, interconnect,myorange] (1) to [thinloop=0,looseness=35,pos=0.6] node [jigsaw,smalljigsaw] {$(y,y)$} (1);
        \draw [->, interconnect,myorange] (1) to [thinloop=90,looseness=30,pos=0.6] node [jigsaw,smalljigsaw] {$(x,y)$} (1);
        \draw [->, interconnect,myorange] (1) to [thinloop=-90,looseness=30,pos=0.6] node [jigsaw,smalljigsaw] {$(y,x)$} (1);
    \end{scope}
  \end{tikzpicture}
  \end{align*}
  
For a more elaborate shorthand notation, see Appendix~\ref{sec:extended:shorthand:notation}.
\end{notation}

\section{Modeling Wait-For Graphs and Deadlock Detection}\label{sec:wait:for:graphs}

We now give a more extensive and more realistic modeling example that showcases the expressive power of PGR.

A \emph{wait-for graph}~\cite{fokkink2018distributed} is a hypergraph in which nodes represent processes, and hyperedges represent requests for resources. A hyperedge has a single source $p$, representing the process requesting the resources, and $M > 0$ targets distinct from $p$, representing the processes from which a resource is requested. The process $p$ requires $0 < N \leq M$ of these resources. Thus, for a fixed $M$, there are multiple types of hyperedges, each representing a particular $N$-out-of-$M$ request. Processes can 
have at most one outgoing $N$-out-of-$M$ request.

The following distributed system behavior is associated with wait-for graphs. A process without an outgoing request is said to be \emph{unblocked}. An unblocked process can \emph{grant} an incoming request, deleting the edge, or create a new $N$-out-of-$M$ request. A process becomes unblocked when its pending $N$-out-of-$M$ request is \emph{resolved}, i.e., when $N$ targeted processes have granted the request.

In order to better illustrate some of PGR's transformational power, we introduce one additional, noncanonical behavior. We consider a process $p$ \emph{overloaded} when it has $n > 2$ incoming requests. When $p$ is overloaded, a clone of $p$, $c(p)$, may be created which takes over $n - 2$ of $p$'s incoming requests. Because we assume that any outgoing request must be resolved before any incoming requests can be resolved, $c(p)$ replicates $p$'s outgoing request, if $p$ has one.

We first define a graph grammar that defines the class of valid wait-for graphs. Then, we will show how to augment the rule set in order to model the distributed system behavior. Finally, we explain how deadlocks can be detected. Throughout, we encode hypergraphs as multigraphs. Note that in this encoding, vertices representing processes are always free of loops, while vertices representing hyperedges always have loops. Hence, the given rules can appropriately discriminate between the two types of vertices.

\subsection{Wait-For Graph Grammar} The starting graph of the grammar is the empty graph, denoted by $\emptyset$. Rule
\begin{align}\tag{\textsc{create}}
  \begin{tikzpicture}[default,node distance=15mm,n/.style={graphNode},baseline=0ex]
  \begin{scope}[xshift=0mm]
    \node (1)  {$\emptyset$};
  \end{scope}
  \tikzstep{6}
  \begin{scope}[xshift=16mm]
    \node (1) [n] {};
  \end{scope}
  \end{tikzpicture}\label{wait:for:create}
\end{align}
models process creation, and rule
\begin{align}\tag{\textsc{$1$-of-$1$}}
  \begin{tikzpicture}[default,node distance=10mm,n/.style={graphNode},baseline=0ex]
  \begin{scope}[xshift=0mm]
    \node (1) [n,yshift=5mm] {};
    \node (2) [n,yshift=-9mm] {};
    \vin{1}{180}{mygreen}{}{$2$}
    \vout{1}{0}{myred}{}{$3$}
    \vin{2}{180}{mygreen}{}{$1$}
  \end{scope}
  \tikzstep{15}
  \begin{scope}[xshift=37mm]
    \node (1) [n,yshift=5mm] {};
    \node (2) [n,yshift=-9mm] {};
    \vin{1}{180}{mygreen}{}{$2$}
    \vout{1}{0}{myred}{}{$3$}
    \vin{2}{180}{mygreen}{}{$1$}
    \node (3) [n, yshift=-2mm] {};
    \draw [->] (2) to node {} (3); 
    \draw [->] (3) to node {} (1); 
    \draw [->, loop=-180,looseness=10] (3) to node [left] {$z$} (3);
    \draw [->, loop=0,looseness=10] (3) to node [right] {$s$} (3);
  \end{scope}
  \end{tikzpicture}\label{wait:for:one:out:of:one}
\end{align}
allows constructing a valid 1-out-of-1 request between nodes.
Labels $z$ and $s$ should be interpreted as $0$ and the successor function, respectively, so that $n$ $s$-loops encode that $n$ requests are yet to be granted.

In the grammar, any $N$-out-of-$M$ request can be extended to a valid $N$-out-of-$(M+1)$ request using rule
\begin{align}\tag{\textsc{ext-0}}
  \begin{tikzpicture}[default,node distance=15mm,n/.style={graphNode},baseline=0ex]
  \begin{scope}[xshift=0mm]
    \node (2) [n,yshift=-10mm] {};
    \node (3) [n,yshift=6mm,xshift=11mm] {};
    \vin{3}{180}{mygreen}{}{$4$}
    \vout{3}{0}{myred}{}{$5$}
    \vin{2}{180}{mygreen}{}{$1$}
    \node (4) [n, yshift=-2mm] {};
    \vout{4}{120}{myblue}{}{$3$}
    \draw [->] (2) to node {} (4); 
    \draw [->, loop=200,looseness=10] (4) to node [left] {$z$} (4);
    \draw [->, interconnect, loop=-20,looseness=10, myblue] (4) to node [jigsaw] {$2$} (4);
  \end{scope}
  \tikzstep{26}
  \begin{scope}[xshift=46mm]
    \node (2) [n,yshift=-10mm] {};
    \node (3) [n,yshift=6mm,xshift=11mm] {};
    \vin{3}{180}{mygreen}{}{$4$}
    \vout{3}{0}{myred}{}{$5$}
    \vin{2}{180}{mygreen}{}{$1$}
    \node (4) [n, yshift=-2mm] {};
    \vout{4}{120}{myblue}{}{$3$}
    \draw [->] (2) to node {} (4); 
    \draw [->, loop=200,looseness=10] (4) to node [left] {$z$} (4);
    \draw [->, interconnect, loop=-20,looseness=10, myblue] (4) to node [jigsaw] {$2$} (4);
    \draw [->] (4) to node {} (3);
  \end{scope}
  \end{tikzpicture}\label{wait:for:ext:zero}
\end{align}
and to a valid $(N+1)$-out-of-$(M+1)$ request using rule
\begin{align}\tag{\textsc{ext-1}}
  \begin{tikzpicture}[default,node distance=15mm,n/.style={graphNode},baseline=0ex]  \begin{scope}[xshift=0mm]
    \node (2) [n,yshift=-10mm] {};
    \node (3) [n,yshift=6mm,xshift=11mm] {};
    \vin{3}{180}{mygreen}{}{$4$}
    \vout{3}{0}{myred}{}{$5$}
    \vin{2}{180}{mygreen}{}{$1$}
    \node (4) [n, yshift=-2mm] {};
    \vout{4}{120}{myblue}{}{$3$}
    \draw [->] (2) to node {} (4); 
    \draw [->, loop=195,looseness=10] (4) to node [left] {$z$} (4);
    \draw [->, interconnect, loop=-20,looseness=10, myblue] (4) to node [jigsaw] {$2$} (4);
  \end{scope}
  \tikzstep{25}
  \begin{scope}[xshift=47mm]
    \node (2) [n,yshift=-10mm] {};
    \node (3) [n,yshift=6mm,xshift=11mm] {};
    \vin{3}{180}{mygreen}{}{$4$}
    \vout{3}{0}{myred}{}{$5$}
    \vin{2}{180}{mygreen}{}{$1$}
    \node (4) [n, yshift=-2mm] {};
    \vout{4}{120}{myblue}{}{$3$}
    \draw [->] (2) to node {} (4); 
    \draw [->, loop=195,looseness=10] (4) to node [left] {$z$} (4);
    \draw [->, interconnect, loop=-20,looseness=10, myblue] (4) to node [jigsaw] {$2$} (4);
    \draw [->] (4) to node {} (3);
    \draw [->, loop=195,looseness=20] (4) to node [left] {$s$} (4);
  \end{scope}
  \end{tikzpicture}\label{wait:for:ext:one}
\end{align}
These four rules suffice for generating any valid wait-for graph.

\subsection{System Modeling}
To model a distributed system, we need rule~\ref{wait:for:create} for process creation, as well as its inverse, \textsc{destroy}, for process destruction. Note that \textsc{destroy} 
constrains the process selected for destruction to be isolated in our framework (i.e., it is not associated with any pending requests), as desired.

Any $N$-out-of-$M$ request is understood to be an atomic action. So for, e.g., modeling 2-out-of-2 requests, we need the rule
\begin{align}\tag{\textsc{2-of-2}}
  \begin{tikzpicture}[default,node distance=15mm,n/.style={graphNode},baseline=0ex]
  \begin{scope}[xshift=0mm]
    \node (1) [n,yshift=5mm,xshift=-11mm] {};
    \node (2) [n,yshift=-9mm] {};
    \node (3) [n,yshift=5mm,xshift=11mm] {};
    \vin{1}{-120}{mygreen}{}{$2$}
    \vout{1}{180}{myred}{}{$3$}
    \vin{3}{-60}{mygreen}{}{$4$}
    \vout{3}{0}{myred}{}{$5$}
    \vin{2}{180}{mygreen}{}{$1$}
  \end{scope}
  \tikzstep{26}
    \begin{scope}[xshift=57mm]
    \node (1) [n,yshift=5mm,xshift=-11mm] {};
    \node (2) [n,yshift=-9mm] {};
    \node (3) [n,yshift=5mm,xshift=11mm] {};
    \vin{1}{-120}{mygreen}{}{$2$}
    \vout{1}{180}{myred}{}{$3$}
    \vin{3}{-60}{mygreen}{}{$4$}
    \vout{3}{0}{myred}{}{$5$}
    \vin{2}{180}{mygreen}{}{$1$}
    \node (4) [n, yshift=-2mm] {};
    \draw [->] (2) to node {} (4); 
    \draw [->] (4) to node {} (1);
    \draw [->] (4) to node {} (3);
    \draw [->, loop=200,looseness=10] (4) to node [left] {$z$} (4);
    \draw [->, loop=-20,looseness=10] (4) to node [right] {$s$} (4);
    \draw [->, loop=-20,looseness=20] (4) to node [right] {$s$} (4);
  \end{scope}
  \end{tikzpicture}\label{wait:for:two:out:of:two}
\end{align}
\vspace{-4ex}

\noindent
Such rules can be easily simulated by a contiguous sequence of rewrite steps $\ref{wait:for:one:out:of:one} \cdot \ref{wait:for:ext:zero}^* \cdot \ref{wait:for:ext:one}^*$, in which the node making the request remains fixed. We omit the details.

A grant request may be modeled by
\begin{align}\tag{\textsc{grant}}
  \begin{tikzpicture}[default,node distance=15mm,n/.style={graphNode},baseline=0ex]
  \begin{scope}[xshift=0mm]
    \node (1) [n,yshift=4mm] {};
    \node (2) [n,yshift=-8mm] {};
    \vin{1}{180}{mygreen}{}{$2$}
    \vin{2}{180}{mygreen}{}{$1$}
    \node (3) [n, yshift=-2mm] {};
    \draw [->] (2) to node {} (3); 
    \draw [->] (3) to node {} (1); 
    \draw [->, loop=-180,looseness=10] (3) to node [left] {$s$} (3);
    \vout{3}{52}{myblue}{}{$3$}
    \draw [->, interconnect, loop=-20,looseness=10, myblue] (3) to node [jigsaw] {$4$} (3);
  \end{scope}
  \tikzstep{14}
  \begin{scope}[xshift=35mm]
    \node (1) [n,yshift=4mm] {};
    \node (2) [n,yshift=-8mm] {};
    \vin{1}{180}{mygreen}{}{$2$}
    \vin{2}{180}{mygreen}{}{$1$}
    \node (3) [n, yshift=-2mm] {};
    \draw [->] (2) to node {} (3);
    \vout{3}{52}{myblue}{}{$3$}
    \draw [->, interconnect, loop=-20,looseness=10, myblue] (3) to node [jigsaw] {$4$} (3);
  \end{scope}
  \end{tikzpicture}\label{wait:for:grant}
\end{align}
and a request resolution by
\begin{align}\tag{\textsc{resolve}}
  \begin{tikzpicture}[default,node distance=15mm,n/.style={graphNode},baseline=0ex]
  \begin{scope}[xshift=0mm]
    \node (1) [n] {};
    \draw [->, loop=180,looseness=10] (1) to node [left] {$z$} (1);
    \vin{1}{-20}{myblue}{}{$1$}
    \vout{1}{20}{myblue}{}{$2$}
  \end{scope}
  \tikzstep{16}
  \begin{scope}[xshift=28mm]
    \node (1) [] {$\emptyset$};
  \end{scope}
  \end{tikzpicture}\label{wait:for:resolve}
\end{align}

This leaves only the cloning behavior for an overloaded process $p$. This requires two rules: one for the case in which $p$ is unblocked, and one for the case in which $p$ is blocked. We use the shorthand notation introduced in Notation~\ref{notation:shorthand}, so that named nodes $r_i$ induce type edges among themselves and from and into the context.\footnote{The type edges between distinctly named nodes $r_i \neq r_j$ are redundant in the considered scenario, since we know that these type edges will never have adherents.}

The case in which $p$ is unblocked is modeled by rule
\begin{align}\tag{\textsc{clone-1}}
  \begin{tikzpicture}[default,node distance=7mm,n/.style={graphNode},baseline=0ex]
  \begin{scope}[xshift=0mm]
    \node (p) [n] {};
    \node[left of=p] (r2) [n] {};
        \node [anchor=center,at=(r2.south),inner sep=0,yshift=-1.1mm] {$r_2$};
    \node[above of=r2](r1) [n] {};
        \node [anchor=center,at=(r1.south),inner sep=0,yshift=-1.1mm] {$r_1$};
    \node[right of=p] (r3) [n] {};
        \node [anchor=center,at=(r3.south),inner sep=0,yshift=-1.1mm] {$r_3$};
    \draw[->] (r1) to node {} (p);
    \draw[->] (r2) to node {} (p);
    \draw[->] (r3) to node {} (p);
    \vin{p}{45}{mygreen}{}{$1$}
  \end{scope}
  \tikzstep{15}
  \begin{scope}[xshift=35mm]
    \node (p) [n] {};
    \node[left of=p] (r2) [n] {};
        \node [anchor=center,at=(r2.south),inner sep=0,yshift=-1.1mm] {$r_2$};
    \node[above of=r2] (r1) [n] {};
        \node [anchor=center,at=(r1.south),inner sep=0,yshift=-1.1mm] {$r_1$};
    \node[right of=p] (r3) [n] {};
        \node [anchor=center,at=(r3.south),inner sep=0,yshift=-1.1mm] {$r_3$};
    \node[right of=r3] (cp) [n] {}; %
    \draw[->] (r1) to node {} (p);
    \draw[->] (r2) to node {} (p);
    \draw[->] (r3) to node {} (cp);
    \vin{cp}{0}{mygreen}{}{$1$} %
  \end{scope}
  \end{tikzpicture}\label{wait:for:clone1}
 \end{align}
and the case in which $p$ is blocked is modeled by rule
\begin{align}\tag{\textsc{clone-2}}
  \begin{tikzpicture}[default,node distance=9mm,n/.style={graphNode},baseline=0ex]
  \begin{scope}[xshift=0mm]
    \node (p) [n] {};
    \node[left of=p] (r2) [n] {};
        \node [anchor=center,at=(r2.south),inner sep=0,yshift=-1.1mm] {$r_2$};
    \node[above of=r2](r1) [n] {};
        \node [anchor=center,at=(r1.south),inner sep=0,yshift=-1.1mm] {$r_1$};
    \node[below of=r2] (r3) [n] {};
        \node [anchor=center,at=(r3.south),inner sep=0,yshift=-1.1mm] {$r_3$};
    \draw[->] (r1) to node {} (p);
    \draw[->] (r2) to node {} (p);
    \draw[->] (r3) to node {} (p);
    \vin{p}{-90}{mygreen}{}{$1$}
    \node[right of=p] (r) [n] {};
    \draw[->] (p) to node {} (r);
    \draw [->, interconnect, loop=90,looseness=10, myblue] (r) to node [jigsaw] {$2$} (r);
    \vout{r}{0}{myblue}{}{$3$}
  \end{scope}
  \tikzstep{25}
  \begin{scope}[xshift=46.5mm]
    \node (p) [n] {};
    \node[left of=p] (r2) [n] {};
        \node [anchor=center,at=(r2.south),inner sep=0,yshift=-1.1mm] {$r_2$};
    \node[above of=r2] (r1) [n] {};
        \node [anchor=center,at=(r1.south),inner sep=0,yshift=-1.1mm] {$r_1$};
    \node[below of=r2] (r3) [n] {};
        \node [anchor=center,at=(r3.south),inner sep=0,yshift=-1.1mm] {$r_3$};
    \node[below of=p] (cp) [n] {}; %
    \draw[->] (r1) to node {} (p);
    \draw[->] (r2) to node {} (p);
    \draw[->] (r3) to node {} (cp);
    \vin{cp}{-120}{mygreen}{}{$1$} %
    \node[right of=p] (r) [n] {};
    \draw[->] (p) to node {} (r);
    \draw [->, interconnect, loop=90,looseness=10, myblue] (r) to node [jigsaw] {$2$} (r);
    \vout{r}{0}{myblue}{}{$3$}
    \node[right of=cp] (cpr) [n] {};%
    \draw[->] (cp) to node {} (cpr);
    \draw [->, interconnect, loop=-90,looseness=10, myblue] (cpr) to node [jigsaw] {$2$} (cpr);
    \vout{cpr}{0}{myblue}{}{$3$}
  \end{scope}
  \end{tikzpicture}\label{wait:for:clone2}
 \end{align}

Cloning would be easier to express if PGR were to be extended with support for hyperedges and cardinality constrained type edges. We envision a rule like
\begin{align}\tag{\textsc{clone*}}
  \begin{tikzpicture}[default,node distance=7mm,n/.style={graphNode},baseline=0ex]
  \begin{scope}[xshift=0mm]
    \node (p) [n] {};
    \vin{p}{150}{mygreen}{}{$1$}
    \vin{p}{-150}{mygreen}{}{$2$}
    \vout{p}{0}{myblue}{}{$3$}
    \node[above of=p, yshift=-2.5mm, xshift=-0.8mm] (c1) [] {{\scriptsize $ =2$}};
    \node[below of=p, yshift=2.7mm, xshift=-0.8mm] (c1) [] {{\scriptsize $>0$}};
  \end{scope}
  \tikzstep{15}
  \begin{scope}[xshift=35mm]
    \node[yshift=3.5mm] (p) [n] {};
    \node[below of=p] (cp) [n] {};
    \vin{p}{180}{mygreen}{}{$1$}
    \vin{cp}{-180}{mygreen}{}{$2$}
    \vout{p}{0}{myblue}{}{$3$}
    \vout{cp}{0}{myblue}{}{$3$}
  \end{scope}
  \end{tikzpicture}\label{wait:for:clone:extended}
 \end{align}
 to capture the same cases as rules \ref{wait:for:clone1} and \ref{wait:for:clone2} combined.
 We leave such an extension to future work. In particular, the precise semantics of hyperedge transformation would have to be determined.

\subsection{Deadlock Detection} Deadlock detection on some valid wait-for graph $G$ can be performed by restricting the rewrite system to rules \ref{wait:for:grant}, \ref{wait:for:resolve} and \textsc{destroy}, yielding a terminating rewrite system. Then the network represented by $G$ contains a deadlock if and only if the (unique) normal form of $G$ is the empty graph $\emptyset$.

For a comparable modeling example, see Appendix~\ref{sec:terminiation:dijkstra:scholten}, in which the Dijkstra--Scholten termination detection algorithm is modeled.

\section{Comparison}\label{sec:comparison}

\newcommand{\framework}[1]{\medskip\noindent\textbf{#1.}\; }
We compare PGR to several other rewriting frameworks. We have selected these frameworks because of their popularity and/or because they bear certain similarities to our approach.

\framework{Double-Pushout (DPO)}
Ehrig et al.'s double-pushout approach (DPO)~\cite{ehrig1973graph} is one of the most prominent approaches in graph rewriting.

A rewrite rule in the DPO approach is of the form $L \hookleftarrow K \rightarrow R$, where $L$ is the 
subgraph to be replaced by subgraph $R$. The graph $K$ is an ``interface'', used to identify a part of $L$ with a part of $R$, and it can be thought of as describing which part of $L$ is preserved by the rule. The identification is formally established through the inclusion $L \hookleftarrow K$ and the graph morphism $K \to R$. The morphism $K \to R$ may be 
non-injective, allowing it to merge nodes that are in the interface.

A DPO rewrite rule $L \stackrel{\varphi}{\hookleftarrow} K \stackrel{\psi}{\rightarrow} R$ is applied inside a graph $G$ as follows~\cite{ehrig86tutorial,ehrig90tutorial}. Let $m : L \to G$ be a graph morphism, which we may assume to be injective~\cite{habel2001double}. The graph $m(L) \isomorph L$ is said to be a \emph{match} for $L$.
The arising rewrite step 
replaces $m(L)$ of $G$
by a fresh copy $c(R)$ of $R$, redirecting edges left dangling by the removal of a $v \in m(L)$ to node $c(\psi(\varphi^{-1}(m^{-1}(v))))$.
For the redirection of edges to work, nodes that leave dangling ends need to be part of the interface, that is, in $m(\varphi(K))$.
This is known as the ``gluing condition''.\footnote{%
By the injectivity assumption for $m$, we need not consider what is known as the ``identification condition''.}
If the gluing condition is not met, the rewrite step is not permitted.

Using Notation~\ref{notation:shorthand}, it is easy to see that PGR at least as expressive as DPO with respect to the generated rewrite relation.
A DPO rule $L \stackrel{\varphi}{\hookleftarrow} K \stackrel{\psi}{\rightarrow} R$ can be directly simulated by a PGR rule $L \to R$ in which the nodes are annotated with their (set of) names in the interface:
$v \in V_L$ is annotated with the names $\varphi^{-1}(v)$, and
$v \in V_R$ is annotated with the names $\psi^{-1}(v)$.

However, DPO is stronger in one respect: a DPO rewrite step preserves the subgraph specified in $K$, whereas a PGR rewrite can be thought of as destroying and replacing the left-hand side of the rule. The consequences for metaproperties relating to parallelism and concurrency will have to be investigated. 

\framework{Generalized DPO}
In some variants of DPO, the inclusion $L \hookleftarrow K$ is generalized to a (possibly non-injective) morphism $\varphi: K \to L$. Intuitively, this allows a node $v$ of $L$ to be ``split apart'' in the interface $K$. Applying the DPO method to a host graph ensures that the patch graph edges incident to $v$ will be incident to one of $v$'s split copies. It does not dictate how these edges should be distributed. Thus, such rewrite steps are non-deterministic.

Generalized DPO rules $L \stackrel{\varphi}{\leftarrow} K \stackrel{\psi}{\rightarrow} R$ can be translated to PGR rules $L \to R$ in the same way as discussed for DPO. Since $\varphi$ is no longer required to be injective, nodes $v \in V_L$ can be annotated with multiple names $\varphi^{-1}(v)$, thereby leading to (non-deterministic) quasi rules (Definition~\ref{def:graph:rewrite:rule}).See also Example~\ref{ex:splitting:node}.

\framework{Single-Pushout (SPO)}
The single-pushout (SPO) approach by L\"owe~\cite{lowe1993algebraic} is the destructive sibling of DPO. It is operationally like DPO, but it drops the gluing condition. Any edges that would become dangling in the host graph are instead removed.

An SPO rule $L \stackrel{\varphi}{\hookleftarrow} K \stackrel{\psi}{\rightarrow} R$ can be simulated by a PGR rule $L \to R$ with annotations as discussed above for DPO, except that each node
$v \in V_L$, for which $\varphi^{-1}(v)$ is empty, is now annotated by a fresh name. The rewrite step will then delete all patch edges connected to such a node.

\framework{DPO Rewriting in Context (DPO-C)}
The DPO Rewriting in Context (DPO-C) approach by L\"owe~\cite{lowe2018double,lowe2019double} addresses the issue of non-determinism in generalized variants of DPO, using ingoing and outgoing arrow annotations to dictate how these edges should be distributed over split copies. The visual representation of DPO-C therefore bears some similarity to that of PGR. In addition, absence of arrow annotations also define negative application conditions like in PGR. However, the patch cannot be transformed as freely as in PGR. For instance, see rule~\eqref{rule:agree} below. 

\framework{AGREE} 
AGREE~\cite{corradini2015agree} and PBPO \cite{corradini2019pbpo} by Corradini et al.\ extend DPO with the ability to erase and clone nodes, while also being able to specify how patch edges are distributed among the copies. For this purpose, a ``filter'' for the edges determines what kind of patch edges are to be dropped. This filter can distinguish different types of edges based on their source, target and label.
Thereby AGREE and PBPO subsume mildly restricted versions of DPO, SPO, and other formalisms.

PGR has some features that are not present in  AGREE and PBPO.
First, in PGR rule applicability can be restricted by conditions on the permitted shape of the patch. 
Second, PGR allows (almost) arbitrary redirecting, moving and copying of patch edges outside the scope of AGREE and PBPO.
For instance,
  \begin{align}
  \begin{tikzpicture}[default,node distance=24mm,n/.style={graphNode},baseline=-.6ex]
    \begin{scope}[xshift=0mm]
        \node (1) [n] {};
        \node (2) [n,right of=1] {};
        \vin{1}{-180}{myred}{}{$1$}
        \vout{1}{0}{myblue}{}{$2$}
        \vin{2}{-180}{mygreen}{}{$6$}
        \vout{2}{0}{mypurple}{}{$7$}
    \end{scope}
     \tikzstep{39}
    \begin{scope}[xshift=58mm]
        \node (1) [n] {};
        \node (2) [n,right of=1] {};
        \vin{2}{-180}{myred}{}{$1$}
        \vout{1}{0}{myblue}{}{$2$}
        \vin{1}{-180}{mygreen}{}{$6$}
        \vout{2}{0}{mypurple}{}{$7$}
    \end{scope}
  \end{tikzpicture}
  \label{rule:agree}
  \end{align}
cannot be expressed in the latter frameworks.
Also inverting the direction of patch edges,
or moving patch edges between nodes of the pattern is not possible in AGREE and PBPO.

On the other hand, AGREE and PBPO capture some transformations that cannot be expressed in PGR.
First, AGREE and PBPO can express some global transformations, unlike PGR.
Second, the ``patch edge filter'' in AGREE and PBPO can distinguish patch edges depending on their label and the ``type'' of the source/target in the context (here the type is given by some type graph).
Both features are not present in PGR as presented in this paper.
However, PGR can be extended with constraints on the patch type edges. (For a discussion on how to encode constraints in the present framework, see Appendix~\ref{sec:richer:constraints}.) We leave the investigation of a suitable constraint language to future work.

\framework{Nagl's Approach} 
Nagl~\cite{nagl1986set} has defined a very powerful graph transformation approach. Rather than identifying ``gluing points'' for the replacement of $L$ by $R$ in a host graph $G$ (as in the previous approaches), rules are equipped with expressions that describe how $R$ is to be embedded into the remainder graph $G^{-} = G - L$. An expression can, e.g., specify that an edge must be created from $u \in G^{-}$ to $v \in R$ if there existed a path (of certain length and containing certain labels) from $u$ to a $w \in L$. Thus, the embedding context may no longer even be local.

While not all of these transformations are supported by PGR, the expressions are arguably much less intuitive than our representation, in which both application conditions and transformations are visualized in a unified manner.

\framework{Habel et al.'s Approach} 
Habel et al.~\cite{habel96graph} have introduced  graph grammars with rule-dependent application conditions that also admit a very intuitive visual representation. These conditions are more powerful than PGR's application conditions, since they can extend arbitrarily far into the context graph. However, transformations are not included in the approach, unlike in PGR, in which the notations for application conditions and transformations are unified.

\framework{Drags} 
To generalize term rewriting techniques to the domain of graphs, Dershowitz and Jouannaud~\cite{dershowitz2019drags} have recently defined the \emph{drag} data structure and framework. A drag is a multigraph in which nodes are labeled by symbols that have an arity equal to the node's outdegree. Nodes labeled with nullary variable symbols are called \emph{sprouts}, and resemble output ports. In addition, the drag comes equipped with a finite number of \emph{roots}, which resemble input ports.

A composition operation $\otimes_\xi$ for drags, parameterized by a two-way \emph{switchboard}~$\xi$ identifying sprouts with roots, gives rise to a rewrite relation $W \otimes_\xi L \to W \otimes_\xi R$. For this rewrite relation to be well-defined, it is required, among others, that $L$ and $R$ have the same number of roots and the same multisets of variables.

Since drag rewriting imposes arity restrictions on nodes, it is more restrictive than patch rewriting concerning the shapes of the graphs that can be rewritten. As drag rewrite steps are local, we believe that PGR can simulate them, but we leave this investigation to future work.

\section{Conclusion}\label{sec:conclusion}

We have introduced \emph{patch graph rewriting}, a framework for graph rewriting that enriches the rewrite rules with a simple, yet powerful language for constraining and transforming the local embedding---or \emph{patch}.  

For future work, we plan to investigate various meta-properties central in graph rewriting~\cite{ehrig90tutorial}, in particular confluence~\cite{plump2005,heckel2002,overbeek2020}, termination~\cite{dershowitz2018graph,zantema2014}, the concurrency theorem~\cite{ehrig1980parallelism}, decomposability and reversibility of rules. We intend to study these properties both globally, for all graphs, as well as locally~\cite{termination:local:2010,termination:automata:2015}, for a given language of graphs~\cite{rensink2004canonical}.
Furthermore, we are interested in extending the framework with constraint labels on patch type edges, and in allowing label transformations. We believe this could be useful for modeling a larger class of distributed algorithms~\cite{fokkink2018distributed}.
Another interesting direction of research is an equational perspective on patch rewriting, as similarly investigated by Ariola and Klop for term graph rewriting~\cite{ariola1996equational}.

\subsubsection*{Acknowledgments} This paper has benefited from discussions with Jan Willem Klop, Nachum Dershowitz, Femke van Raamsdonk, Roel de Vrijer, and Wan Fokkink. We thank Andrea Corradini and the anonymous reviewers for their useful suggestions. Both authors received funding from the Netherlands Organization for Scientific Research (NWO) under the Innovational Research Incentives Scheme Vidi (project.\ No.\ VI.Vidi.192.004).

\bibliographystyle{plainurl}
\bibliography{main}

\newpage 
\section*{Appendix}
\appendix

\section{Depicting Graphs}\label{sec:depicting:graphs}

  Consider the graph $G = (V,E,\ssrc,\stgt,\slab)$ defined by:
  \begin{align*}
      V &= \set{ 1,2,3,4 } 
        & \src{e_1} &= 1 & \tgt{e_1} &= 2 &  & \lab{e_1} = a \\
      E &= \set{ e_1,e_2,e_3,e_4 }
        & \src{e_2} &= 2 & \tgt{e_2} &= 3 &  & \lab{e_2} = b \\
      L &= \set{ a,b }
        & \src{e_3} &= 2 & \tgt{e_3} &= 3 &  & \lab{e_3} = b \\
      & & \src{e_4} &= 4 & \tgt{e_4} &= 2 &  & \lab{e_4} = a
  \end{align*}
  The graph $G$ can be visualized as follows:
  \begin{center}
  \begin{tikzpicture}[default,node distance=15mm]
    \node (1) {$1$};
    \node (2) [right of=1] {$2$};
    \node (3) [above right of=2] {$3$};
    \node (4) [above left of=2] {$4$};
    
    \draw [->] (1) to node [above] {$a$} (2);
    \draw [->] (2) to[bend right=20] node [below right] {$b$} (3);
    \draw [->] (2) to[bend right=-20] node [above left] {$b$} (3);
    \draw [->] (4) to node [above right] {$a$} (2);
  \end{tikzpicture}
  \end{center}
  The edges are displayed as arrows from the source to the target of the edge,
  annotated by the label of the edge.
  In such a visualization names of the edges are typically suppressed;
  so the graph is defined only up to renaming of the edges.
  
  We also suppress the vertex names if they are irrelevant.
  Then the graphs are defined up to isomorphism only.
  For instance,
  \begin{center}
  \begin{tikzpicture}[default,node distance=15mm,n/.style={graphNode}]
    \node (1) [n] {};
    \node (2) [right of=1,n] {};
    \node (3) [above right of=2,n] {};
    \node (4) [above left of=2,n] {};
    
    \draw [->] (1) to node [above] {$a$} (2);
    \draw [->] (2) to[bend right=20] node [below right] {$b$} (3);
    \draw [->] (2) to[bend right=-20] node [above left] {$b$} (3);
    \draw [->] (4) to node [above right] {$a$} (2);
  \end{tikzpicture}
  \end{center}
  is a visualisation of $G$ where edge and vertex names are suppressed. As a convention, nodes with distinct coordinates are always assumed to be distinct.

\section{Encoding Vertex Labels}\label{sec:vertex:labels}

Vertex labels can be encoded in  at least two different ways:
  \begin{enumerate}[label=(\roman*)]
    \item
      Choosing a vertex label set $L_V$ disjoint from edge label set $L_E$, and adding an edge $v \xrightarrow{\alpha} v$ when $v$ has label $\alpha$. 
    \item
      Using a distinguished node $r$ that
      has precisely one edge $r \xrightarrow{\alpha} v$
      to every other node $v$ and $\alpha$ is $v$'s vertex label. Then $r$ is the only node with no incoming edges. 
  \end{enumerate}
  In PGR, approach (ii) has an advantage over (i) 
  when one wants to match all loops of a node with an arbitrary label.
  Assume that we want to specify a rule that can drop all loops from an arbitrary node.
  Using approach~(i) we need a rule of the form
  \begin{center}
  \begin{tikzpicture}[default,node distance=20mm,n/.style={graphNode}]
    \begin{scope}[xshift=0mm]
        \node (1) [n] {};
        \draw [<-] (1) to [loop=90,looseness=12] node [above] {$a$} (1);
        \vin{1}{180}{myred}{}{$1$}
        \vout{1}{0}{myblue}{}{$2$}
        \draw [<-, interconnect,myorange] (1) to [loop=-90,looseness=14] node [jigsaw] {$3$} (1);
    \end{scope}
    \tikzstep{15}
    \begin{scope}[xshift=35mm]
        \node (1) [n] {};
        \draw [<-] (1) to [loop=90,looseness=12] node [above] {$a$} (1);
        \vin{1}{180}{myred}{}{$1$}
        \vout{1}{0}{myblue}{}{$2$}
    \end{scope}
  \end{tikzpicture}
  \end{center}
  for every node label $a$.
  Using approach~(ii) a single rule suffices:
  \begin{center}
  \begin{tikzpicture}[default,node distance=12mm,n/.style={graphNode}]
    \begin{scope}[xshift=0mm]
        \node (1) [n] {};
        \node (2) [n,above of=1] {};
        \draw [->,interconnect,mygreen] (2) to node [jigsaw,pos=0.4] {$4$} (1);
        \vin{1}{180}{myred}{}{$1$}
        \vout{1}{0}{myblue}{}{$2$}
        \draw [<-, interconnect,myorange] (1) to [loop=-90,looseness=14] node [jigsaw] {$3$} (1);
        \vout{2}{0}{mypurple}{}{$4$}
    \end{scope}
    \tikzstep{15}
    \begin{scope}[xshift=34.5mm]
        \node (1) [n] {};
        \node (2) [n,above of=1] {};
        \draw [->,interconnect,mygreen] (2) to node [jigsaw,pos=0.4] {$4$} (1);
        \vin{1}{180}{myred}{}{$1$}
        \vout{1}{0}{myblue}{}{$2$}
        \vout{2}{0}{mypurple}{}{$4$}
    \end{scope}
  \end{tikzpicture}
  \end{center}
  Note that the upper node cannot have incoming patch edges, so this is the node responsible for assigning labels to the other nodes. Thus \jigsawlabel{4}{mygreen} binds the edge that carries the node label of the lower node.

\section{Elementary Graph Operations}\label{sec:example:rules}

The following examples demonstrate that PGR easily supports a number of elementary graph operations.

\begin{example}[Merging Two Nodes]
  The following rewrite rule can be applied to any pair of nodes
  connected by an edge with label $a$:
  \begin{align}
  \begin{tikzpicture}[default,node distance=20mm,n/.style={graphNode},baseline=0ex]
    \begin{scope}[xshift=0mm]
        \node (1) [n] {};
        \node (2) [n,right of=1] {};
        \draw [->] (1) to[bend right=0] node [above] {$a$} (2);
        \vin{1}{-115}{myred}{}{$1$}
        \vout{1}{115}{myblue}{}{$2$}
        \draw [->, interconnect,myorange] (1) to [loop=180,looseness=14] node [jigsaw] {$3$} (1);
        \draw [->, interconnect,myorange] (1) to [bend left=70] node [jigsaw] {$4$} (2);
        \draw [->, interconnect,myorange] (2) to [bend left=70] node [jigsaw] {$5$} (1);
        \vin{2}{-65}{mygreen}{}{$6$}
        \vout{2}{65}{mypurple}{}{$7$}
        \draw [->, interconnect,myorange] (2) to [loop=0,looseness=14] node [jigsaw] {$8$} (2);
    \end{scope}
    \tikzstep{34}
    \begin{scope}[xshift=54mm]
        \node (1) [n] {};
        \vin{1}{-115}{myred}{}{$1$}
        \vout{1}{-115+180}{myblue}{}{$2$}
        \vin{1}{-115+90}{mygreen}{}{$6$}
        \vout{1}{-115-90}{mypurple}{}{$7$}
        \draw [->, interconnect,myorange] (1) to [thinloop=205,looseness=25] node [jigsaw] {$3$} (1);
        \draw [->, interconnect,myorange] (1) to [thinloop=115,looseness=25] node [jigsaw] {$4$} (1);
        \draw [->, interconnect,myorange] (1) to [thinloop=25,looseness=25] node [jigsaw] {$8$} (1);        
        \draw [->, interconnect,myorange] (1) to [thinloop=-65,looseness=25] node [jigsaw] {$5$} (1);        
    \end{scope}
  \end{tikzpicture}
  \label{rule:merge}
  \end{align}
  When the rule is applied, the edge with label $a$ is dropped,
  and the two nodes are merged into a single node. All incoming and outgoing edges are redirected accordingly.

  If we want to exclude edges between the nodes of the pattern
  other than the edge labeled with $a$, 
  then the rule can be simplified as follows:
  \begin{align}
  \begin{tikzpicture}[default,node distance=20mm,n/.style={graphNode},baseline=0ex]
    \begin{scope}[xshift=0mm]
        \node (1) [n] {};
        \node (2) [n,right of=1] {};
        \draw [->] (1) to node [above] {$a$} (2);
        \vin{1}{-115}{myred}{}{$1$}
        \vout{1}{-115+180}{myblue}{}{$2$}
        \vin{2}{-115}{mygreen}{}{$3$}
        \vout{2}{-115+180}{mypurple}{}{$4$}
    \end{scope}
    \draw [->,ultra thick,n/.style={graphNode}] (32mm,0) to ++(10mm,0mm);
    \begin{scope}[xshift=60mm]
        \node (1) [n] {};
        \vin{1}{-115}{myred}{}{$1$}
        \vout{1}{-115+180}{myblue}{}{$2$}
        \vin{1}{-115+90}{mygreen}{}{$3$}
        \vout{1}{-115-90}{mypurple}{}{$4$}
    \end{scope}
  \end{tikzpicture}
  \end{align}
  Now the patch can only contain edges between the context and the pattern of the rule.

\end{example}

\begin{example}[Copying a Node]\label{ex:copying:node}
  The following graph rewrite rule copies including all its edges
  (from the context, to the context and loops):
  \begin{align}
  \begin{tikzpicture}[default,node distance=25mm,n/.style={graphNode},baseline=0ex]
    \begin{scope}[xshift=0mm]
        \node (1) [n] {};
        \vin{1}{180}{myred}{}{$1$}
        \draw [->, interconnect,mygreen] (1) to [loop=90,looseness=15] node [jigsaw] {$2$} (1);
        \vout{1}{0}{myblue}{}{$3$}
    \end{scope}
    \tikzstep{16}
    \begin{scope}[xshift=37mm]
        \node (1) [n] {};
        \node (2) [n,right of=1] {};
        \vin{1}{180}{myred}{}{$1$}
        \draw [->, interconnect,mygreen] (1) to [loop=90,looseness=15] node [jigsaw] {$2$} (1);
        \vout{1}{0}{myblue}{}{$3$}
        \vin{2}{180}{myred}{}{$1$}
        \draw [->, interconnect,mygreen] (2) to [loop=90,looseness=15] node [jigsaw] {$2$} (2);
        \vout{2}{0}{myblue}{}{$3$}
    \end{scope}
  \end{tikzpicture}
  \label{rule:copy}
  \end{align}
  PGR allows a fine-grained control
  that enables one to do much more than simply copying a node.
  For instance, consider the rule
  \begin{align}
  \begin{tikzpicture}[default,node distance=25mm,n/.style={graphNode},baseline=0ex]
    \begin{scope}[xshift=0mm]
        \node (1) [n] {};
        \vin{1}{180}{myred}{}{$1$}
        \draw [->, interconnect,mygreen] (1) to [loop=90,looseness=15] node [jigsaw] {$2$} (1);
        \vout{1}{0}{myblue}{}{$3$}
    \end{scope}
    \tikzstep{16}
    \begin{scope}[xshift=37mm]
        \node (1) [n] {};
        \node (2) [n,right of=1] {};
        \vin{1}{180}{myred}{}{$1$}
        \draw [->, interconnect,mygreen] (1) to [loop=90,looseness=15] node [jigsaw] {$2$} (1);
        \draw [->, interconnect,mygreen] (2) to [loop=90,looseness=15] node [jigsaw] {$2$} (2);
        \vout{2}{0}{myblue}{}{$3$}
    \end{scope}
  \end{tikzpicture}
  \end{align}  
  This rule makes a partial copy of a node.
  It copies the node including all its loops.
  However, the incoming edges (from the context) and outgoing edges (to the context)
  are not duplicated, but redistributed between the two copies.
  All incoming edges are assigned to the left copy,
  all outgoing edges are assigned to the right copy.
\end{example}

\begin{example}[Non-deterministically Splitting a Node]\label{ex:splitting:node}
  In the preceding example we have seen how to copy
  including all its incoming and outgoing arrows.
  We now want to split a node into two nodes and non-deterministically
  distribute the edges between the two nodes.
  This can be achieved by the following quasi patch graph rewrite rule:
  \begin{align}
  \begin{tikzpicture}[default,node distance=5mm,n/.style={graphNode},baseline=0ex]
    \begin{scope}[xshift=0mm]
        \node (1) [n] {};
        \vin{1}{-130}{myred}{}{$1$}
        \draw [->, interconnect,myorange] (1) to [thinloop=180,looseness=25] node [jigsaw] {$2$} (1);
        \vout{1}{130}{mypurple}{}{$3$}
        \vin{1}{-50}{mygreen}{}{$4$}
        \draw [->, interconnect,myorange] (1) to [thinloop=0,looseness=25] node [jigsaw] {$5$} (1);
        \vout{1}{50}{myblue}{}{$6$}
    \end{scope}
    \tikzstep{16}
    \begin{scope}[xshift=37mm]
        \node (1) [n] {};
        \node (2) [n,right of=1] {};
        \vin{1}{-130}{myred}{}{$1$}
        \draw [->, interconnect,myorange] (1) to [thinloop=180,looseness=25] node [jigsaw] {$2$} (1);
        \vout{1}{130}{mypurple}{}{$3$}
        \vin{2}{-50}{mygreen}{}{$4$}
        \draw [->, interconnect,myorange] (2) to [thinloop=0,looseness=25] node [jigsaw] {$5$} (2);
        \vout{2}{50}{myblue}{}{$6$}
    \end{scope}
  \end{tikzpicture}
  \label{rule:split}
  \end{align}
  The patch type of the left-hand side is not a simple graph.
  Here 
  \begin{itemize}
    \item
      \jigsawlabel{1}{myred} and \jigsawlabel{4}{mygreen}
      form a partitioning of the incoming edges (from the context),
    \item
      \jigsawlabel{3}{mypurple} and \jigsawlabel{6}{myblue}
      form a partitioning of the outgoing edges (into the context).
    \item
      \jigsawlabel{2}{myorange} and \jigsawlabel{5}{myorange}
      form a partitioning of the loops on the node.
  \end{itemize}
  There is no fairness condition imposed here. 
  For instance, the partitions 
  \jigsawlabel{1}{myred}, \jigsawlabel{2}{myorange}, \jigsawlabel{3}{mypurple}
  can be empty.
  Then all edges are assigned to the right node. A fair distribution of the edges
  could be achieved by extending the patch types
  with a richer constraint language.
\end{example}

\section{Extended Shorthand Notation}\label{sec:extended:shorthand:notation}

Extending Notation~\ref{notation:shorthand}, we suggest the following notation to indicate that a certain implicit edge should not be present:
      \begin{center}
      \begin{tikzpicture}[default,node distance=25mm,n/.style={graphNode},baseline=(1.base)]
        \begin{scope}[xshift=0mm]
            \node (1) [] {$n$}; 
            \node (2) [right of=1] {$m$};
            \draw [->,interconnect] (1) to node (l) [jigsaw,smalljigsaw] {$(x,y)$} (2);
            \noimplict{l}
        \end{scope}
      \end{tikzpicture}
      \end{center}  
  For instance, rule~\eqref{rule:copy} for copying a node can be written as
  \begin{align}
  \begin{tikzpicture}[default,node distance=20mm,n/.style={graphNode},baseline=0ex]
    \begin{scope}[xshift=0mm]
        \node (1) [n] {};
          \node [anchor=north,at=(1.south),inner sep=0] {$x$};
    \end{scope}
    \tikzstep{6}
    \begin{scope}[xshift=16mm]
        \node (1) [n] {}; 
          \node [anchor=north,at=(1.south),inner sep=0] {$x$};
        \node (2) [n,right of=1] {}; 
          \node [anchor=north,at=(2.south),inner sep=0] {$x$};
        \draw [->,interconnect] (1) to[bend left=50] node (l) [jigsaw,smalljigsaw] {$(x,x)$} (2);
        \noimplict{l}
        \draw [->,interconnect] (2) to[bend left=50] node (l) [jigsaw,smalljigsaw] {$(x,x)$} (1);
        \noimplict{l}
    \end{scope}
  \end{tikzpicture}
  \label{rule:copy:abbreviated}
  \end{align}  
  which is shorthand for
  \begin{align}
  \begin{tikzpicture}[default,node distance=35mm,n/.style={graphNode},baseline=0ex]
    \begin{scope}[xshift=0mm]
        \node (1) [n] {};
        \vin{1}{-135}{myred}{smalljigsaw,pos=1.8,xshift=-3mm}{$(\cxtnode,x)$}
        \vout{1}{45}{myblue}{smalljigsaw,pos=1.8,xshift=3mm}{$(x,\cxtnode)$}
        \draw [->, interconnect,myorange] (1) to [loop=-45,looseness=22,pos=0.6] node [jigsaw,smalljigsaw] {$(x,x)$} (1);
    \end{scope}
    \tikzstep{18}
    \begin{scope}[xshift=42mm]
        \node (1) [n] {};
        \node (2) [n,right of=1] {};
        \vin{1}{-135}{myred}{smalljigsaw,pos=1.8,xshift=-3mm}{$(\cxtnode,x)$}
        \vout{1}{45}{myblue}{smalljigsaw,pos=1.8,xshift=3mm}{$(x,\cxtnode)$}
        \draw [->, interconnect,myorange] (1) to [loop=-45,looseness=22,pos=0.6] node [jigsaw,smalljigsaw] {$(x,x)$} (1);
        \vin{2}{-135}{myred}{smalljigsaw,pos=1.8,xshift=-3mm}{$(\cxtnode,x)$}
        \vout{2}{45}{myblue}{smalljigsaw,pos=1.8,xshift=3mm}{$(x,\cxtnode)$}
        \draw [->, interconnect,myorange] (2) to [loop=-45,looseness=22,pos=0.6] node [jigsaw,smalljigsaw] {$(x,x)$} (2);
    \end{scope}
  \end{tikzpicture}
  \end{align}
  Rule~\eqref{rule:split} for non-deterministically splitting a node can now be written as
  \begin{align}
  \begin{tikzpicture}[default,node distance=20mm,n/.style={graphNode},baseline=0ex]
    \begin{scope}[xshift=0mm]
        \node (1) [n] {};
          \node [anchor=north,at=(1.south),inner sep=0] {$x,y$};
        \draw [->,interconnect] (1) to[loop=180,looseness=22] node (l) [jigsaw,smalljigsaw,pos=0.55] {$(x,y)$} (1);
        \noimplict{l}
        \draw [->,interconnect] (1) to[loop=0,looseness=22] node (l) [jigsaw,smalljigsaw,pos=0.55] {$(y,x)$} (1);
        \noimplict{l}
    \end{scope}
    \tikzstep{22}
    \begin{scope}[xshift=33mm]
        \node (1) [n] {}; 
          \node [anchor=north,at=(1.south),inner sep=0] {$x$};
        \node (2) [n,right of=1] {}; 
          \node [anchor=north,at=(2.south),inner sep=0] {$y$};
        \draw [->,interconnect] (1) to[bend left=50] node (l) [jigsaw,smalljigsaw] {$(x,y)$} (2);
        \noimplict{l}
        \draw [->,interconnect] (2) to[bend left=50] node (l) [jigsaw,smalljigsaw] {$(y,x)$} (1);
        \noimplict{l}
    \end{scope}
  \end{tikzpicture}
  \label{rule:split:abbreviated}
  \end{align}  
  which is an abbreviation for
  \begin{align}
  \begin{tikzpicture}[default,node distance=5mm,n/.style={graphNode},baseline=0ex]
    \begin{scope}[xshift=0mm]
        \node (1) [n] {};
        \vin{1}{-135}{myred}{smalljigsaw,pos=1.8,xshift=-3mm}{$(\cxtnode,x)$}
        \vout{1}{135}{myblue}{smalljigsaw,pos=1.8,xshift=-3mm}{$(x,\cxtnode)$}
        \vin{1}{-45}{mygreen}{smalljigsaw,pos=1.8,xshift=3mm}{$(\cxtnode,y)$}
        \vout{1}{45}{mypurple}{smalljigsaw,pos=1.8,xshift=3mm}{$(y,\cxtnode)$}
        \draw [->, interconnect,myorange] (1) to [thinloop=180,looseness=35,pos=0.6] node [jigsaw,smalljigsaw] {$(x,x)$} (1);
        \draw [->, interconnect,myorange] (1) to [thinloop=0,looseness=35,pos=0.6] node [jigsaw,smalljigsaw] {$(y,y)$} (1);
    \end{scope}
    \tikzstep{22}
    \begin{scope}[xshift=48mm]
        \node (1) [n] {};
        \node (2) [n,right of=1] {};
        \vin{1}{-135}{myred}{smalljigsaw,pos=1.8,xshift=-3mm}{$(\cxtnode,x)$}
        \vout{1}{135}{myblue}{smalljigsaw,pos=1.8,xshift=-3mm}{$(x,\cxtnode)$}
        \vin{2}{-45}{mygreen}{smalljigsaw,pos=1.8,xshift=3mm}{$(\cxtnode,y)$}
        \vout{2}{45}{mypurple}{smalljigsaw,pos=1.8,xshift=3mm}{$(y,\cxtnode)$}
        \draw [->, interconnect,myorange] (1) to [thinloop=180,looseness=35,pos=0.6] node [jigsaw,smalljigsaw] {$(x,x)$} (1);
        \draw [->, interconnect,myorange] (2) to [thinloop=0,looseness=35,pos=0.6] node [jigsaw,smalljigsaw] {$(y,y)$} (2);
    \end{scope}
  \end{tikzpicture}
  \end{align} 
  Compare the left-hand side of this rule with the right-hand side of the example given in rule~\eqref{rule:merge:abbreviated}.

\section{Termination Detection: Dijkstra--Scholten}\label{sec:terminiation:dijkstra:scholten}

The Dijkstra--Scholten algorithm \cite{fokkink2018distributed} detects the termination of a centralized basic algorithm. It is assumed that the basic algorithm is executed on a loop-free undirected network, and that there is a distinguished initiator process. The Dijkstra--Scholten algorithm builds a tree alongside the execution of the basic algorithm, where processes that are active in the basic algorithm are part of the Dijkstra--Scholten tree. Each process maintains a counter, originally 0. A counter represents a conservative estimate of how many of the process's children are active. When the initiator's counter is $0$, it can quit the tree, by which it correctly announces that the algorithm has terminated (i.e., there are no more basic messages in transit, and all processes have quit the tree).

The algorithm can be described as follows:
\begin{itemize}
\item The initiator starts as the root of the Dijkstra--Scholten tree $T$.
\item A process $p \in T$ can send a basic message to a neighbour process. When it does, it increments its own counter.
\item When a $p \in T$ receives a basic message from a $p'$, it sends $p'$ a control message to indicate that it is already in the tree.
\item When a $p \notin T$ receives a basic message from a $p'$, it joins the tree, and stores $p'$ as its parent.
\item A non-initiator with a counter of $0$ can quit the tree. When it does, it informs its parent that it is no longer its child via a control message.
\item When a process receives a control message, it decrements its counter.
\item An initiator with a counter of $0$ can quit the tree, by which it announces that the basic algorithm has terminated.
\end{itemize}

In our graph representation modeling the state of a Dijkstra--Scholten tree, we use the following labels on directed edges (assume  $u \neq v$):
\begin{itemize}
    \item An edge $u \xrightarrow{e} v$ denotes a network connection between $u$ and $v$.
    \item An edge $u \xrightarrow{b} v$ denotes a basic message in transit between $u$ and $v$.
    \item An edge $u \xrightarrow{c} v$ denotes a control message in transit between $u$ and $v$.
    \item An edge $u \xrightarrow{p} v$ denotes that $v$ has stored $u$ as its parent.
    \item A loop $u \xrightarrow{i} u$ denotes that $u$ is the initiator.
    \item A loop $u \xrightarrow{t} u$ denotes that $u$ is in the Dijkstra--Scholten tree.
    \item A loop $u \xrightarrow{s} u$ denotes an increment of the implicit counter at $u$, which should be interpreted to be $0$ if there are no $s$-loops. 
\end{itemize}

The starting undirected network is encoded as the smallest multigraph containing:
\begin{itemize}
    \item $u \xrightarrow{e} v$ iff there is an undirected edge between $u$ and $v$, and
    \item $u \xrightarrow{i} u$ and $u \xrightarrow{t} u$ iff $u$ is the initiator of the basic algorithm.
\end{itemize}

\begin{notation}[Black Node Shorthand]\label{notation:black:node:shorthand}
As an abbreviation, we use black nodes in rules. For the black nodes and the context node we induce the largest patch type edge subgraph possible (i.e., the most permissive subgraph of patch type edges). The patch type edges are simply copied over to the right-hand side, where nodes that are in the same relative position on the left and on the right are identified. 
\end{notation}

The notation is best explained through an example.

\begin{example}
The rule
\begin{align*}
  \begin{tikzpicture}[default,node distance=20mm,n/.style={graphNode},baseline=0ex]
  \begin{scope}[xshift=0mm]
    \node (1) [n, fill=black] {};
    \node (2) [n, right of=1] {};
    \node (3) [n, above of=2, fill=black] {};
    \draw [->] (1) to [loop=-90] node [below] {$a$} (1);
    \draw [->] (1) to node [above] {$b$} (2);
    \draw [->, interconnect,myblue] (3) to [] node [jigsaw] {$1$} (2);
  \end{scope}
  \tikzstep{28}
  \begin{scope}[xshift=50mm]
    \node (1) [n, fill=black] {};
    \node (2) [n, right of=1] {};
    \node (3) [n, above of=2, fill=black] {};
    \draw [->, interconnect, myblue] (1) to [loop=-90] node [jigsaw] {$1$} (1);
    \draw [->, interconnect,myblue] (3) to [] node [jigsaw] {$1$} (2);
  \end{scope}
 \end{tikzpicture}
\end{align*}
abbreviates the rule
\begin{align*}
  \begin{tikzpicture}[default,node distance=20mm,n/.style={graphNode},baseline=0ex]
  \begin{scope}[xshift=0mm]
    \node (1) [n, fill=black] {};
    \node (2) [n, right of=1] {};
    \node (3) [n, above of=2, fill=black] {};
    \draw [->] (1) to [loop=-90] node [below] {$a$} (1);
    \draw [->] (1) to node [above] {$b$} (2);
    \draw [->, interconnect,myblue] (3) to [] node [jigsaw] {$1$} (2);
    \draw [->, interconnect,myorange] (1) to [bend left=20] node [jigsaw] {$2$} (3);
    \draw [->, interconnect,myorange] (3) to [bend left=20] node [jigsaw] {$3$} (1);
    \draw [->, interconnect, myorange] (1) to [loop=120] node [jigsaw] {$4$} (1);
    \draw [->, interconnect, myorange] (3) to [loop=90] node [jigsaw] {$5$} (3);
    \vin{1}{180}{myorange}{}{$6$}
    \vout{1}{-30}{myorange}{}{$7$}
    \vin{3}{160}{myorange}{}{$8$}
    \vout{3}{0}{myorange}{}{$9$}
  \end{scope}
  \tikzstep{28}
  \begin{scope}[xshift=50mm]
    \node (1) [n, fill=black] {};
    \node (2) [n, right of=1] {};
    \node (3) [n, above of=2, fill=black] {};
    \draw [->, interconnect, myblue] (1) to [loop=-90] node [jigsaw] {$1$} (1);
    \draw [->, interconnect,myblue] (3) to [] node [jigsaw] {$1$} (2);
    \draw [->, interconnect,myorange] (1) to [bend left=20] node [jigsaw] {$2$} (3);
    \draw [->, interconnect,myorange] (3) to [bend left=20] node [jigsaw] {$3$} (1);
    \draw [->, interconnect, myorange] (1) to [loop=120] node [jigsaw] {$4$} (1);
    \draw [->, interconnect, myorange] (3) to [loop=90] node [jigsaw] {$5$} (3);
    \vin{1}{180}{myorange}{}{$6$}
    \vout{1}{-30}{myorange}{}{$7$}
    \vin{3}{160}{myorange}{}{$8$}
    \vout{3}{0}{myorange}{}{$9$}
  \end{scope}
  \end{tikzpicture}
\end{align*}
\end{example}

Observe that Notation~\ref{notation:black:node:shorthand} provides a simpler alternative to Notation~\ref{notation:shorthand} for use cases where node deletion and merging do not take place. We believe it to be generally useful for modeling distributed algorithms, where it is often only the relations between processes that change.

An execution of the Dijkstra--Scholten algorithm can now be modeled using the following rules:
\begin{itemize}
\item Sending a basic message:
\begin{align}\tag{\textsc{snd b}}
  \begin{tikzpicture}[default,node distance=15mm,n/.style={graphNode},baseline=0ex]
  \begin{scope}[xshift=0mm]
    \node (1) [n, fill=black] {};
    \node (2) [n, right of=1, fill=black] {};
    \draw [->] (1) to [loop=90] node [above] {$t$} (1);
    \draw [->] (1) to node [above] {$e$} (2);
  \end{scope}
  \tikzstep{24}
  \begin{scope}[xshift=40mm]
    \node (1) [n, fill=black] {};
    \node (2) [n, right of=1, fill=black] {};
    \draw [->] (1) to [loop=90] node [above] {$t$} (1);
    \draw [->] (1) to [loop=-90] node [below] {$s$} (1);
    \draw [->] (1) to node [above] {$e$} (2);
    \draw [->] (1) to[bend left=-25] node [below] {$b$} (2);
  \end{scope}
  \end{tikzpicture}\label{dijkstra:scholten:send}
\end{align}

\item Receiving a basic message while in the tree:
\begin{align}\tag{\textsc{rec b-1}}
  \begin{tikzpicture}[default,node distance=15mm,n/.style={graphNode},baseline=0ex]
  \begin{scope}[xshift=0mm]
    \node (1) [n, fill=black] {};
    \node (2) [n, right of=1, fill=black] {};
    \draw [->] (2) to [loop=90] node [above] {$t$} (2);
    \draw [->] (1) to node [above] {$b$} (2);
  \end{scope}
  \tikzstep{24}
  \begin{scope}[xshift=38mm]
    \node (1) [n, fill=black] {};
    \node (2) [n, right of=1, fill=black] {};
    \draw [->] (2) to [loop=90] node [above] {$t$} (2);
    \draw [->] (2) to node [above] {$c$} (1);
  \end{scope}
  \end{tikzpicture}\label{dijkstra:scholten:rec:in:tree}
\end{align}

\item Receiving a basic message while not in the tree:
\begin{align}\tag{\textsc{rec b-2}}
  \begin{tikzpicture}[default,node distance=15mm,n/.style={graphNode},baseline=0ex]
  \begin{scope}[xshift=0mm]
    \node (1) [n, fill=black] {};
    \node (2) [n, right of=1] {};
    \draw [->] (1) to [bend right=20] node [above] {$b$} (2);
    \draw [->, interconnect,myred] (1) to [bend left=80] node [jigsaw] {$1$} (2);
    \draw [->, interconnect,myblue] (2) to [bend left=80] node [jigsaw] {$2$} (1);
    \vin{2}{60}{mygreen}{}{$3$}
    \vout{2}{-60}{mypurple}{}{$4$}
  \end{scope}
  \tikzstep{26}
  \begin{scope}[xshift=38mm]
    \node (1) [n, fill=black] {};
    \node (2) [n, right of=1] {};
    \draw [->] (1) to [bend right=20] node [above] {$p$} (2);
    \draw [->, interconnect,myred] (1) to [bend left=80] node [jigsaw] {$1$} (2);
    \draw [->, interconnect,myblue] (2) to [bend left=80] node [jigsaw] {$2$} (1);
    \vin{2}{60}{mygreen}{}{$3$}
    \vout{2}{-60}{mypurple}{}{$4$}
    \draw [->] (2) to [loop=0] node [right] {$t$} (2);
  \end{scope}
  \end{tikzpicture}\label{dijkstra:scholten:not:in:tree}
\end{align}

\item Receiving a control message:
\begin{align}\tag{\textsc{rec c}}
  \begin{tikzpicture}[default,node distance=15mm,n/.style={graphNode},baseline=0ex]
  \begin{scope}[xshift=0mm]
    \node (1) [n, fill=black] {};
    \node (2) [n, fill=black, right of=1] {};
    \draw [->] (2) to node [above] {$c$} (1);
    \draw [->] (1) to [loop=180] node [left] {$s$} (1);
  \end{scope}
  \tikzstep{22}
  \begin{scope}[xshift=35mm]
    \node (1) [n, fill=black] {};
    \node (2) [n, fill=black, right of=1] {};
  \end{scope}
  \end{tikzpicture}\label{dijkstra:scholten:rec:control}
\end{align}

\item A non-initiator quits:
\begin{align}\tag{\textsc{quit}}
  \begin{tikzpicture}[default,node distance=15mm,n/.style={graphNode},baseline=0ex]
  \begin{scope}[xshift=0mm]
    \node (1) [n, fill=black] {};
    \node (2) [n, right of=1] {};
    \draw [->] (1) to[bend right=20] node [above] {$p$} (2);
    \draw [->, interconnect,myred] (1) to [bend left=80] node [jigsaw] {$1$} (2);
    \draw [->, interconnect,myblue] (2) to [bend left=80] node [jigsaw] {$2$} (1);
    \vin{2}{60}{mygreen}{}{$3$}
    \vout{2}{-60}{mypurple}{}{$4$}
    \draw [->] (2) to [loop=0] node [right] {$t$} (2);
  \end{scope}
  \tikzstep{31}
  \begin{scope}[xshift=45mm]
    \node (1) [n, fill=black] {};
    \node (2) [n, right of=1] {};
    \draw [->] (2) to[bend left=20] node [above] {$c$} (1);
    \draw [->, interconnect,myred] (1) to [bend left=80] node [jigsaw] {$1$} (2);
    \draw [->, interconnect,myblue] (2) to [bend left=80] node [jigsaw] {$2$} (1);
    \vin{2}{60}{mygreen}{}{$3$}
    \vout{2}{-60}{mypurple}{}{$4$}
  \end{scope}
  \end{tikzpicture}\label{dijkstra:scholten:noninitiator:quits}
\end{align}

\item The initiator quits/announces:
\begin{align}\tag{\textsc{announce}}
  \begin{tikzpicture}[default,node distance=15mm,n/.style={graphNode},baseline=0ex]
  \begin{scope}[xshift=0mm]
    \node (1) [n] {};
    \vin{1}{180}{myred}{}{$1$}
    \vout{1}{0}{myblue}{}{$2$}
    \draw [->] (1) to [loop=90] node [above] {$i$} (1);
    \draw [->] (1) to [loop=-90] node [below] {$t$} (1);
  \end{scope}
  \tikzstep{17}
  \begin{scope}[xshift=38mm]
    \node (1) [n] {};
    \vin{1}{180}{myred}{}{$1$}
    \vout{1}{0}{myblue}{}{$2$}
    \draw [->] (1) to [loop=90] node [above] {$i$} (1);
  \end{scope}
  \end{tikzpicture}\label{dijkstra:scholten:initiator:announces}
\end{align}
\end{itemize}

\section{Modeling Constraints}\label{sec:richer:constraints}

  Instead of using unlabeled type edges, one could label the type edges with expressions from some suitable constraint language and strengthen the definition of adherence. Constraints that immediately suggest themselves include those that restrict the permitted number of adherent edges or their labels.

  For many application scenarios, however, the number of edges is either unbounded, or there is only a small number of permitted choices. The latter case can in principle be handled with unlabeled type edges as follows. Consider the following left-hand side of some rule:
  \begin{center}
  \begin{tikzpicture}[default,node distance=15mm,n/.style={graphNode}]
    \begin{scope}[xshift=0mm]
        \node (1) {$1$};
        \vin{1}{-180}{myred}{draw=none,minimum size=0}{}
    \end{scope}
  \end{tikzpicture}
  \end{center}
  which allows $1$ to have any number of incoming edges from nodes other than itself (and no outgoing edges are allowed).
  Assume instead that we want to express that node $1$
  has either one or two incoming edges labeled with $b$ from nodes other than itself.
  Then we can replace the scheme with three schemes that together capture precisely this constraint:
  \begin{center}
  \begin{tikzpicture}[default,node distance=13mm,n/.style={graphNode}]
    \begin{scope}[xshift=0mm]
        \node (1) {$1$};
        \node (2) [left of=1] {$2$};
        \draw [->] (2) to node [above] {$b$} (1); 
        \vin{2}{-115}{red}{draw=none,minimum size=0}{} \vout{2}{65}{red}{draw=none,minimum size=0}{}
        
        \draw [->, interconnect,red] (2) to [loop=180,looseness=14] node [] {} (2);
    \end{scope}
    \begin{scope}[xshift=30mm]
        \node (1) {$1$};
        \node (2) [left of=1] {$2$};
        \draw [->] (2) to[bend left=20] node [above] {$b$} (1); 
        \draw [->] (2) to[bend right=20] node [below] {$b$} (1); 
        \vin{2}{-115}{red}{draw=none,minimum size=0}{} \vout{2}{65}{red}{draw=none,minimum size=0}{}
        
        \draw [->, interconnect,red] (2) to [loop=180,looseness=14] node [] {} (2);
    \end{scope}    \begin{scope}[xshift=60mm]
        \node (1) {$1$};
        \node (2) [left of=1] {$2$};
        \node (3) [right of=1] {$3$};
        \draw [->] (2) to node [above] {$b$} (1);
        \draw [->] (3) to node [above] {$b$} (1);
        \vin{2}{-115}{red}{draw=none,minimum size=0}{} \vout{2}{65}{red}{draw=none,minimum size=0}{}
        \vin{3}{-115}{red}{draw=none,minimum size=0}{} \vout{3}{65}{red}{draw=none,minimum size=0}{}
        
        \draw [->, interconnect,red] (2) to [loop=180,looseness=14] node [] {} (2);
        \draw [->, interconnect,red] (3) to [loop=0,looseness=14] node [] {} (3);
        
        \draw [->, interconnect,red] (2) to[bend right=35]  node  {} (3);
        \draw [->, interconnect,red] (3) to[bend right=45]  node  {} (2);
    \end{scope}
  \end{tikzpicture}
  \end{center}
  Clearly, this transformation quickly leads to a prohibitive large number of rules.
  We have chosen to keep things simple as this suffices for many application scenarios.

\end{document}